\documentclass[11pt]{article}
\usepackage{CJK}
\usepackage{amsfonts}
\usepackage{mathrsfs}
\usepackage{bbm}
\usepackage{amsfonts}
\usepackage{amssymb,amsmath,graphicx}
\usepackage{amsthm}
\usepackage{float}
\usepackage[colorlinks=true]{hyperref}
\hypersetup{urlcolor=blue, citecolor=red}

\newtheorem{theorem}{Theorem}[section]
\newtheorem{Legendre theorem}{Legendre Theorem}[section]
\newtheorem{lemma}[theorem]{Lemma}
\newtheorem{corollary}[theorem]{Corollary}
\newtheorem{example}[theorem]{Example}
\newtheorem{remark}[theorem]{Remark}

\newtheorem{proposition}[theorem]{Proposition}
\newtheorem{definition}[theorem]{Definition}

\topmargin=0mm
\evensidemargin=0mm
\oddsidemargin=0mm
\headsep=0mm
\textwidth=15.5truecm
\textheight=21.5truecm
\parindent=2em
\allowdisplaybreaks

\begin{document}

	\title{New Constructions of Full Flag Codes Based on Partial Spreads}
	\author{Xiang Han \and Xinran Li \and Gang Wang\textsuperscript{$^*$}  }
\date{\small College of Science, Civil Aviation University of China, 300300, Tianjin, China. \\ E-mail:xhang@cauc.edu.cn; lixr@163.com; gwang06080923@mail.nankai.edu.cn.\\
$^*$Corresponding author}

	\maketitle {\bf Abstract.} Flag codes are a class of multishot network codes comprising sequences of nested subspaces (flags) within the vector space $\mathbb{F}_q^n$, where $q$ is a prime power. In this paper, we propose a family of constructions for full flag codes based on partial spreads. The distances of this family include maximum distance (optimum distance flag codes), second-maximum distance (quasi-optimum distance flag codes), as well as other feasible  values. The structure of these flag codes resembles that of a \textquotedblleft sandwich", consisting of one layer of companion matrix and two layers of partial spreads. Furthermore, we present an efficient decoding algorithm for these codes.

	{\bf Keywords}: Network coding; Subspace codes; Partial spreads; Flag codes;

	{\bf Mathematics Subject Classification}: 94B35 94B60 11T71

	\renewcommand{\theequation}{\thesection.\arabic{equation}}
	\catcode`@=11 \@addtoreset{equation}{section} \catcode`@=12
	\maketitle{}
	
	
\section{Introduction}
	
	Random network coding was initially introduced in \cite{AC} and serves as a groundbreaking transmission strategy for information networks. The network model is represented as an acyclic directed graph that may include multiple senders and receivers. Each intermediate node is capable of performing a random linear combination of the input vectors and transmitting the resulting output, thereby enhancing the information rate of the network. \textit{Subspace codes} were first introduced by Koetter and Kschischang in \cite{KK} as an algebraic framework for random network coding. This approach employs vector spaces rather than individual vectors as codewords. A subspace code is formally defined as a collection of subspaces within a vector space $\mathbb{F}_q^n$ over a finite field $\mathbb{F}_q$, where $q$ is a prime power.
	
	In recent decades, researchers have focused significant attention on a specific type of subspace codes known as \textit{constant dimension codes}, which are distinguished by the property that all codewords (subspaces) share the same dimension. In \cite{GR} and \cite{MGR}, two special classes of constant dimension codes, referred to as \textit{spread codes} and \textit{partial spread codes}, were introduced.
	
	The concept of constant dimension codes was subsequently extended to \textit{multishot codes}, i.e., codes where the codeword is a sequence of subspaces (see \cite{NU}). On the other hand, \textit{flag codes} were initially introduced in \cite{LNV} as multishot codes with a rich algebraic structure. These refer to sequences of nested subspaces with specified dimensionalities (type vectors of flags) in $\mathbb{F}_q^n$. Furthermore, \cite{ANS1} proposed the notion of \textit{projected codes} for flag codes, i.e., constant dimension codes derived by projecting flags from a flag code. In recent years, numerous studies have investigated how the various structural properties of a flag code influence its projected codes and the corresponding feedback effects (see \cite{AN4} and \cite{ANS3}).
	
	The properties of \textit{optimal distance flag code} (ODFC) were thoroughly examined in \cite{ANS1}, where the first method for constructing such codes utilizing spread codes was introduced, along with a two-step decoding algorithm. In \cite{K}, lower and upper bounds on the maximum possible number of codewords for various flag codes with given parameters were established, and a construction for optimal distance full flag codes with $n=2k+1$ was provided. Additional studies related to optimal distance flag codes can be found in \cite{CY,ANS2,AN1}.
	
	More recently, \textit{quasi-optimum distance cyclic orbit flag codes}, i.e., codes that achieve a \textquotedblleft second best\textquotedblright\ value of the flag distance in the orbital case, were studied in \cite{AN2}. Subsequently, in \cite{AN3}, the authors extended their investigation to \textit{quasi-optimum distance flag code}(QODFC) within a general framework, with a focus on the interplay between the flag code and its projected codes. They characterized QODFC in terms of a specific projected flag code, provided a systematic construction method for it under arbitrary type vectors, and generalized the construction approach to accommodate arbitrary distances within a controlled range. They also posed the problem of constructing a QODFC with the maximum cardinality.
	
	 The primary objective of this paper is not only to propose a construction for QODFC with maximum cardinality, but also to further extend this construction and develop a corresponding decoding algorithm. To solve these problems, we develop a family of constructions for full flag codes based on partial spreads, motivated by the construction of partial spread in \cite{GR} and the ODFC based on spread in \cite{ANS1}. These constructions are called \textquotedblleft sandwich", including one layer of companion matrix $M$ and two layers of partial spreads related to $M$. According to the difference of \textquotedblleft sandwich" constructions, we can derive ODFC, QODFC and the flag codes with other distance values. Specially, the cardinality of QODFC can reach the maximum value under some condition. In fact, the cardinalities of the QODFC and other full flag codes generated by the \textquotedblleft sandwich" structure, with varying distance values, are greater than that of the full flag codes exhibiting the same distance in \cite{AN3}. Furthermore, building upon the two-step decoding algorithm presented in \cite{ANS1}, we have extended it to a three-step decoding algorithm and applied this enhancement to the decoding process of the full flag codes produced by the \textquotedblleft sandwich" structure.
	
	The structure of this paper is organized as follows. Section 2 provides fundamental background on constant dimension codes and flag codes. Section 3 explores the relationship between the cardinality and distance of full flag codes and their projected codes, deriving an upper bound on the cardinality of a specific class of full flag codes. Section 4 focuses on proposing a family of constructions for full flag codes based on partial spreads. Specifically, Definition \ref{D4.7} introduces a structural framework for flag codes that resembles a \textquotedblleft sandwich", consisting of three layers of matrices. These flag codes can then be categorized into three distinct classes according to their respective distances: ODFC, QODFC and flag codes with other possible distance values. Finally, Section 5 presents an efficient decoding algorithm tailored for these codes on erasure channel.

\section{Preliminaries}
\subsection{Constant dimension codes}
In this section, we present the essential concepts pertaining to constant dimension codes. The definitions below are all derived from \cite{KK}.
Let $q$ be a prime power and $\mathbb{F}_q$ the finite field with $q$ elements. Let $\mathbb{F}_q^n$ denote the $n$-dimensional vector space over $\mathbb{F}_q$. For a given dimension $1 \leqslant k < n$, the set of all $k$-dimensional vector subspaces of $\mathbb{F}_q^n$ is denoted by $\mathcal{G}_q(k,n)$, referred to as the \textit{Grassmannian}. The subspace distance is defined as
 \begin{align}\label{F2.1}
 d_S(\mathcal{U},\mathcal{V}):=&\dim(\mathcal{U}+\mathcal{V})-\dim(\mathcal{U}\cap\mathcal{V})\nonumber\\
 =&2(k-\dim(\mathcal{U}\cap\mathcal{V}))=2(\dim(\mathcal{U}+\mathcal{V})-k),	
 \end{align}
 where $\mathcal{U}$, $\mathcal{V}$ $\in$ $\mathcal{G}_q(k,n)$.
 A \textit{constant dimension code} of dimension $k$ is a nonempty subset $\mathcal{C} \subseteq \mathcal{G}_q(k,n)$, and the minimum distance of a constant dimension code $\mathcal{C}$ is defined as
 \begin{align}
 d_S(\mathcal{C}):=\min\{d_S(\mathcal{U},\mathcal{V})\ |\ \mathcal{U}, \mathcal{V} \in \mathcal{C},\ \mathcal{U}\neq\mathcal{V}\}.	
 \end{align}
 In the case where $|\mathcal{C}| = 1$, we define $d_S(\mathcal{C}) = 0$. Thus, the minimum distance of a constant dimension code $\mathcal{C}$ is
 \begin{align}
 d_S(\mathcal{C})\leqslant\min\{2k,2(n-k)\}.	
 \end{align}

 A \textit{partial $k$-spread} of $\mathbb{F}_q^n$ is a collection of $k$-dimensional vector subspaces of $\mathbb{F}_q^n$ that intersect trivially. If the union of the elements in the partial $k$-spread is $\mathbb{F}_q^n$, then we refer to it as a \textit{$k$-spread}.

 Let $A_q(n,d;k)$ denote the maximum size of a constant dimension code with dimension $k$ and minimum distance $d$ in $\mathbb{F}_q^n$. The following results provide some information about $A_q(n,d;k)$, which are utilized in this paper.

\begin{lemma}\rm\cite [Lemma 7]{GR}\label{L2.1}
Let $\mathcal{C} \subseteq \mathcal{G}_q(k,n)$ be a partial spread code. Then the minimum distance of $\mathcal{C}$ is $2k$, and its maximum size is
\begin{align}
A_q(n,2k;k)\leqslant \lfloor\frac{q^n-1}{q^k-1}\rfloor.
\end{align}
\end{lemma}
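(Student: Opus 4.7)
The plan is to split the statement into its two assertions and handle each by a direct computation.

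For the distance claim, I would start from the defining property of a partial $k$-spread: any two distinct members $\mathcal{U},\mathcal{V}\in\mathcal{C}$ satisfy $\mathcal{U}\cap\mathcal{V}=\{0\}$, so $\dim(\mathcal{U}\cap\mathcal{V})=0$. Plugging this into the formula \eqref{F2.1} for the subspace distance immediately gives $d_S(\mathcal{U},\mathcal{V})=2(k-0)=2k$ for every pair of distinct codewords. Since this value is already the maximum allowed by the upper bound $d_S(\mathcal{C})\leqslant\min\{2k,2(n-k)\}$ (assuming $k\leqslant n-k$, which is implicit because otherwise no nontrivial partial spread of $k$-subspaces exists), we conclude $d_S(\mathcal{C})=2k$.

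For the cardinality bound, I would use a standard double-counting of nonzero vectors. Every codeword $\mathcal{U}\in\mathcal{C}$ is a $k$-dimensional subspace and therefore contains exactly $q^k-1$ nonzero vectors. The trivial-intersection property shows that the sets of nonzero vectors contributed by different codewords are pairwise disjoint: if a nonzero $v$ lay in both $\mathcal{U}$ and $\mathcal{V}$, it would witness $\mathcal{U}\cap\mathcal{V}\neq\{0\}$, contradicting the partial-spread condition. Summing over $\mathcal{C}$ therefore yields
\[
|\mathcal{C}|\,(q^k-1)\ \leqslant\ |\mathbb{F}_q^n\setminus\{0\}|\ =\ q^n-1,
\]
and since $|\mathcal{C}|$ is a nonnegative integer we can divide and take the floor to obtain $|\mathcal{C}|\leqslant\lfloor(q^n-1)/(q^k-1)\rfloor$. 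Taking the maximum over all partial spread codes gives the stated bound on $A_q(n,2k;k)$.

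Neither step is genuinely hard; the only subtlety is making sure the disjointness of nonzero vectors is argued from the trivial-intersection definition rather than silently assumed. I would also note in passing that equality $|\mathcal{C}|(q^k-1)=q^n-1$ characterizes the case of a (full) $k$-spread, which exists precisely when $k\mid n$; this observation is not required for the inequality itself but is useful context for the constructions later in the paper.
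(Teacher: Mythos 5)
The paper does not prove this lemma; it is imported verbatim as \cite[Lemma 7]{GR}, so there is no in-paper proof to compare against. Your argument is correct and is the standard one: the trivial-intersection property forces $\dim(\mathcal{U}\cap\mathcal{V})=0$, which via \eqref{F2.1} gives distance exactly $2k$, and the disjointness of the punctured subspaces $\mathcal{U}\setminus\{0\}$ across codewords gives $|\mathcal{C}|(q^k-1)\leqslant q^n-1$, hence the floor bound by integrality. Your parenthetical observation that a nontrivial partial $k$-spread forces $2k\leqslant n$ (two trivially-intersecting $k$-subspaces span a $2k$-dimensional subspace of $\mathbb{F}_q^n$) is exactly the point needed to conclude $2k=\min\{2k,2(n-k)\}$, and is worth keeping explicit. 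No gaps.
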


\begin{lemma}\rm\cite [Theorem 5]{NS}\label{L2.2}
Let $r$ denote the remainder obtained when dividing $n$ by $k$. If $k>\frac{q^r - 1}{q - 1}$, then
\begin{align}
A_q(n,2k;k)=\frac{q^n-q^{k+r}}{q^k-1}+1.
\end{align}
\end{lemma}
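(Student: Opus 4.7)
The plan is to prove the equality $A_q(n,2k;k)=\frac{q^n-q^{k+r}}{q^k-1}+1$ by matching upper and lower bounds. Writing $n=mk+r$ with $0\leq r<k$, the claimed value can be rewritten as $q^{k+r}\cdot\frac{q^{(m-1)k}-1}{q^k-1}+1$, which both motivates the construction side and organizes the hole count used in the converse.

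For the lower bound I would invoke a Beutelspacher-style construction: decompose $\mathbb{F}_q^n = U \oplus W$ with $\dim U = k+r$ and $\dim W = (m-1)k$, take a full $k$-spread $\mathcal{S}$ of $W$ (which has $\frac{q^{(m-1)k}-1}{q^k-1}$ members since $k\mid (m-1)k$), and then combine $\mathcal{S}$ with $q^{k+r}-1$ further lifts of spread elements via suitable graphs of $\mathbb{F}_q$-linear maps into $U$, plus one final $k$-subspace inside $U$, reaching the stated total. Verifying that the resulting family intersects pairwise trivially is a direct check.

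For the upper bound I would use hole counting. Given any partial $k$-spread $\mathcal{C}$ of size $s$, let $\theta$ denote the number of projective points of $\mathbb{F}_q^n$ not covered by any member of $\mathcal{C}$. Since the elements of $\mathcal{C}$ meet pairwise trivially,
\begin{align*}
\theta = \frac{q^n-1}{q-1} - s\cdot\frac{q^k-1}{q-1},
\end{align*}
and an elementary rearrangement shows that the target inequality $s\leq\frac{q^n-q^{k+r}}{q^k-1}+1$ is equivalent to the hole bound $\theta \geq q^k\cdot\frac{q^r-1}{q-1}$.

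The main obstacle is establishing precisely this hole inequality, and it is here that the hypothesis $k>\frac{q^r-1}{q-1}$ enters. Following a Drake--Freeman-type argument, I would suppose for contradiction that $\theta<q^k\cdot\frac{q^r-1}{q-1}$ and analyze how the members of $\mathcal{C}$ intersect a carefully chosen auxiliary subspace of complementary dimension: too few holes forces too many $k$-subspaces of $\mathcal{C}$ to accumulate inside a common smaller flat, eventually exceeding its projective-point count. The arithmetic threshold $\frac{q^r-1}{q-1}$ appears because the assumption $k>\frac{q^r-1}{q-1}$ eliminates the only borderline configurations that could otherwise survive the pigeonhole step. The detailed combinatorial bookkeeping is carried out in \cite{NS}; once the hole bound is in hand, the upper estimate on $s$ follows immediately, and together with the construction this yields equality.
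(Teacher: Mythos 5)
The statement you are asked to prove is presented in the paper only as a cited lemma (Theorem~5 of~\cite{NS}); the paper itself contains no proof of it, so there is no in-house argument to compare yours against.

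Your outline correctly identifies the two halves of the argument, and the arithmetic bookkeeping is sound. In particular, the rewriting $\frac{q^n-q^{k+r}}{q^k-1}+1 = q^{k+r}\cdot\frac{q^{(m-1)k}-1}{q^k-1}+1$ matches the Beutelspacher construction count, and the equivalence between $s\leq\frac{q^n-q^{k+r}}{q^k-1}+1$ and the hole bound $\theta\geq q^k\cdot\frac{q^r-1}{q-1}$ checks out: starting from $\theta=\frac{q^n-1}{q-1}-s\cdot\frac{q^k-1}{q-1}$ and substituting the claimed bound on~$s$ gives $\theta\geq\frac{q^{k+r}-q^k}{q-1}=q^k\cdot\frac{q^r-1}{q-1}$. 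However, the proposal stops short of actually establishing that hole bound, which is the only nontrivial part of the theorem (the lower bound is classical). The sentence describing a "Drake--Freeman-type argument" in which too few holes forces accumulation in a smaller flat is a plausible-sounding heuristic, not a proof; it does not pin down what auxiliary subspace is chosen, how the intersections are counted, or precisely how $k>\frac{q^r-1}{q-1}$ rules out the borderline cases. You explicitly punt these steps to~\cite{NS}. So while the structure of your argument is correct and consistent with the literature, there is a genuine gap: the proposal reduces the theorem to the key counting inequality but does not prove it, and the verbal sketch of the pigeonhole step is too imprecise to fill that gap on its own.
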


\subsection{Flag codes}
In this section, we introduce the fundamental concepts related to flag codes. The definitions below are all derived from \cite{AN3}.

\begin{definition}
Given integers $0 < t_1 < \cdots < t_r < n$, a flag of type $\mathbf{t} = (t_1, \dots, t_r)$ on $\mathbb{F}_q^n$ is a sequence $\mathcal{F} = (\mathcal{F}_1, \dots, \mathcal{F}_r)$ of $\mathbb{F}_q$-subspaces of $\mathbb{F}_q^n$ satisfying the following conditions:
\begin{itemize}
\item[(1)] $\mathcal{F}_1 \subsetneq \cdots \subsetneq \mathcal{F}_r \subsetneq \mathbb{F}_q^n$, 	
\item[(2)] $\dim(\mathcal{F}_i) = t_i$ for all $1 \leqslant i \leqslant r$.	
\end{itemize}
\end{definition}

\begin{definition}
The set of all flags of a given type vector $\mathbf{t} = (t_1, \dots, t_r)$ is referred to as the flag variety of type $\mathbf{t}$, denoted by $\mathcal{F}_q(\mathbf{t},n)$. A flag code $\mathcal{C}$ of type $\mathbf{t} = (t_1, \dots, t_r)$ on $\mathbb{F}_q^n$ is a nonempty subset of the flag variety $\mathcal{F}_q(\mathbf{t},n)$. Furthermore, a flag code $\mathcal{C}$ of type $\mathbf{t} = (1, \dots, n-1)$ is called a full flag code.
\end{definition}

The flag distance between two flags $\mathcal{F}=(\mathcal{F}_1,\dots,\mathcal{F}_r)$ and  $\mathcal{F}'=(\mathcal{F}'_1,\dots,\mathcal{F}'_r)$ of type $\textbf{t} = (t_1,\dots,t_r)$ on $\mathbb{F}_q^n$ is
\begin{align}
	d_f(\mathcal{F}, \mathcal{F}'):=\sum_{i=1}^r d_S(\mathcal{F}_i, \mathcal{F}'_i),
\end{align}
and the minimum distance of a flag code $\mathcal{C}$ of type $\textbf{t} = (t_1,\dots,t_r)$ on $\mathbb{F}_q^n$ is
\begin{align}\label{F2.7}
	d_f(\mathcal{C}):=\min \{ d_f(\mathcal{F}, \mathcal{F}')\mid\mathcal{F}, \mathcal{F}' \in \mathcal{C}, \mathcal{F}\neq \mathcal{F}' \},
\end{align}
whenever $|\mathcal{C}| \geqslant 2$. In case $|\mathcal{C}| = 1$, we put $d_f(\mathcal{C}) = 0$.

The concepts of optimum distance flag codes and quasi-optimum distance flag codes represent significant areas of study within the field of flag codes, which are defined as follows.

\begin{lemma}\rm\cite{AN3}\label{L2.5}
The value $d_f(\mathcal{C})$ is an even integer satisfying $0 \leqslant d_f(\mathcal{C}) \leqslant D^{(\mathbf{t},n)}$, where
\begin{align}
	D^{(\mathbf{t},n)}:=2\left( \sum_{t_i\leqslant\frac{n}{2}} t_i+\sum_{t_i>\frac{n}{2}} (n-t_i)\right) .
\end{align}
If $d_f(\mathcal{C}) = D^{(\mathbf{t},n)}$, then $\mathcal{C}$ is called an optimum distance flag code (ODFC).

Moreover, if $\mathcal{C}$ is a full flag code of type $\textbf{t} = (1, \dots, n-1)$ on $\mathbb{F}_q^n$, then $D^{(\mathbf{t},n)}$ can be simplified to $D^{(n)}$ and
\begin{align}\label{F2.9}
	D^{(n)}=\left\{ \begin{matrix}
		\frac{n^2-1}{2},   &  n\ \text{is}\  \text{odd},\\
		\frac{n^2}{2},   &  n\ \text{is}\   \text{even}. \\
    \end{matrix}	
    \right.
\end{align}
\end{lemma}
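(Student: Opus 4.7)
The plan is to establish three things in sequence: evenness, the general upper bound $D^{(\mathbf{t},n)}$, and the closed form $D^{(n)}$ in the full-flag case. All three follow fairly directly from equation~(\ref{F2.1}) and the combinatorics of the type vector, so the main work is bookkeeping rather than any conceptual obstacle.

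First, for evenness, I would invoke (\ref{F2.1}) directly: for any two subspaces $\mathcal{U},\mathcal{V} \in \mathcal{G}_q(k,n)$ we have $d_S(\mathcal{U},\mathcal{V}) = 2(k - \dim(\mathcal{U}\cap\mathcal{V}))$, which is a nonnegative even integer. Applying this to each pair $(\mathcal{F}_i,\mathcal{F}'_i)$ and summing, the flag distance $d_f(\mathcal{F},\mathcal{F}') = \sum_{i=1}^r d_S(\mathcal{F}_i,\mathcal{F}'_i)$ is a sum of nonnegative even integers, hence an even integer $\geqslant 0$; taking the minimum over distinct pairs in $\mathcal{C}$ (or using the convention $d_f(\mathcal{C})=0$ when $|\mathcal{C}|=1$) preserves this.

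Second, for the upper bound, I would combine the bound $d_S(\mathcal{F}_i,\mathcal{F}'_i) \leqslant \min\{2t_i, 2(n - t_i)\}$ (a direct consequence of $\dim(\mathcal{F}_i \cap \mathcal{F}'_i) \geqslant \max\{0, 2t_i - n\}$ and $\dim(\mathcal{F}_i \cap \mathcal{F}'_i) \leqslant t_i$) with the observation that $\min\{2t_i,2(n-t_i)\} = 2t_i$ exactly when $t_i \leqslant n/2$, and $= 2(n-t_i)$ otherwise. Summing termwise and splitting the index set accordingly gives
\begin{align*}
d_f(\mathcal{F},\mathcal{F}') \;\leqslant\; 2\!\left( \sum_{t_i\leqslant n/2} t_i + \sum_{t_i > n/2} (n-t_i) \right) = D^{(\mathbf{t},n)},
\end{align*}
and taking the minimum over $\mathcal{C}$ yields $d_f(\mathcal{C}) \leqslant D^{(\mathbf{t},n)}$.

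Third, for the explicit formula in the full-flag case $\mathbf{t}=(1,2,\ldots,n-1)$, I would split into parity. If $n = 2m+1$ is odd, then $t_i \leqslant n/2$ iff $i \leqslant m$, so
\begin{align*}
D^{(n)} = 2\!\left(\sum_{i=1}^{m} i + \sum_{i=m+1}^{2m}(2m+1-i)\right) = 2\!\left(\sum_{i=1}^{m} i + \sum_{j=1}^{m} j\right) = 2m(m+1) = \tfrac{n^2-1}{2}
\end{align*}
after the substitution $j=2m+1-i$. If $n=2m$ is even, then $t_i \leqslant n/2$ iff $i \leqslant m$, and the same bookkeeping gives
\begin{align*}
D^{(n)} = 2\!\left(\sum_{i=1}^{m} i + \sum_{i=m+1}^{2m-1}(2m-i)\right) = 2\!\left(\tfrac{m(m+1)}{2} + \tfrac{(m-1)m}{2}\right) = 2m^2 = \tfrac{n^2}{2}.
\end{align*}
The only ``hard'' part is being careful with the index shift in the second sum and with which side of $n/2$ the middle index $n/2$ itself falls on; in the odd case there is no such middle index, while in the even case $t_m = m = n/2$ is included in the first sum, and both give the claimed piecewise formula.
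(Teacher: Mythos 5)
The paper does not prove this lemma; it is stated as a citation to reference~[AN3], so there is no in-paper proof to compare against. Your blind proof is correct and complete. The three steps are exactly what is needed: evenness and nonnegativity of each summand $d_S(\mathcal{F}_i,\mathcal{F}'_i)=2\bigl(t_i-\dim(\mathcal{F}_i\cap\mathcal{F}'_i)\bigr)$ propagate through the sum and the minimum; the termwise bound $d_S(\mathcal{F}_i,\mathcal{F}'_i)\leqslant\min\{2t_i,2(n-t_i)\}$ follows from $\dim(\mathcal{F}_i\cap\mathcal{F}'_i)\geqslant\max\{0,2t_i-n\}$ (the latter inequality being a consequence of $\dim(\mathcal{F}_i+\mathcal{F}'_i)\leqslant n$), and splitting by $t_i\lessgtr n/2$ gives $D^{(\mathbf{t},n)}$; and your parity bookkeeping for the full-flag case checks out, with $2m(m+1)=(n^2-1)/2$ for $n=2m+1$ and $2m^2=n^2/2$ for $n=2m$, including the correct handling of the boundary index $i=m=n/2$ in the even case (it lands in the first sum because the condition is $t_i\leqslant n/2$, and indeed $2t_m=2(n-t_m)$ there so the placement is immaterial to the value). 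Your proof matches the standard argument one would find in [AN3] and requires no amendment.
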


\begin{definition}\label{D2.3}
Given the type vector $\mathbf{t} = (t_1,\dots,t_r)$, a flag code $\mathcal{C}\subseteq \mathcal{F}_q(\mathbf{t},n)$ is said to be a quasi-optimum distance flag code (QODFC) if $d_f(\mathcal{C})=D^{(\mathbf{t},n)}-2$.
\end{definition}

The flag code is closely related to its projected codes, the definition of which is as follows.

\begin{definition}\label{D2.7}
Let $\mathcal{C}$ be a flag code of type $\mathbf{t} = (t_1,\dots,t_r)$ on $\mathbb{F}_q^n$. For every $1 \leqslant i \leqslant r$, the $i$-th projected (subspace) code of $\mathcal{C}$ is the constant dimension code
\begin{align}
\mathcal{C}_i:=\{\mathcal{F}_i\mid \mathcal{F}=(\mathcal{F}_1,\dots,\mathcal{F}_i,\dots,\mathcal{F}_r)\in \mathcal{C} \} \subseteq \mathcal{G}_q(t_i,n).	
\end{align}	
\end{definition}

Let $\mathcal{C}$ be a flag code of type $\mathbf{t} = (t_1,\dots,t_r)$ on $\mathbb{F}_q^n$. Whenever they appear in \textbf{t}, denote the special dimensions
\begin{align}
	t_L := \max \{t_i\mid2t_i \leqslant n\} \ \ \mathrm{and} \ \ t_R :=\min\{t_i\mid2t_i\geqslant n\}.
\end{align}
If $\frac{n}{2}$ is a dimension in the type vector $\mathbf{t}$, then $t_L = t_R = \frac{n}{2}$. Moreover, if every dimension is upper (resp. lower) bounded by $\frac{n}{2}$, then $L = r$ and $R$ is not defined (resp. $R = 1$ and $L$ is not defined). In all other cases, these dimensions $t_L$ and $t_R$ exist, and they are distinct and consecutive.

The methods for determining ODFC and QODFC are given by the following conclusions.

\begin{lemma}\rm\cite [Theorem 4.8]{NS1}\label{L2.4}
Let $\mathcal{C}$ be a flag code of type $\textbf{t} = (t_1,\dots,t_r)$ on $\mathbb{F}_q^n$ and consider the indices $L$ and $R$. The following statements are equivalent:
\begin{itemize}
\item[(1)] The flag code $\mathcal{C}$ is an ODFC,
\item[(2)] $\mathcal{C}_L$ and $\mathcal{C}_R$ are constant dimension codes of maximum distance with cardinality $|\mathcal{C}_L|=|\mathcal{C}_R|=|\mathcal{C}|$.
\end{itemize}
\end{lemma}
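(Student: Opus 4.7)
The plan is to prove both directions by combining the termwise upper bound $d_S(\mathcal{F}_i, \mathcal{F}'_i) \leqslant \min\{2t_i, 2(n-t_i)\}$ coming from (\ref{F2.1}) with the nested structure built into a flag, so that equality in the total flag distance is equivalent to equality at the two pivotal indices $L$ and $R$.

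For the implication $(1) \Rightarrow (2)$, I would first observe that
\[
D^{(\mathbf{t},n)} = \sum_{i=1}^{r} \min\{2t_i, 2(n-t_i)\}
\]
is precisely the sum of the per-index upper bounds on the subspace distance. Hence if $d_f(\mathcal{C}) = D^{(\mathbf{t},n)}$, then for every pair of distinct flags each summand $d_S(\mathcal{F}_i,\mathcal{F}'_i)$ must attain its maximum value; in particular the projections onto indices $L$ and $R$ show that $\mathcal{C}_L$ and $\mathcal{C}_R$ realize the maximum subspace distance in their respective Grassmannians. Since these maxima are strictly positive, two distinct flags in $\mathcal{C}$ cannot share their $L$-th or $R$-th component, which forces $|\mathcal{C}_L| = |\mathcal{C}_R| = |\mathcal{C}|$.

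The implication $(2) \Rightarrow (1)$ is the heart of the argument, and here I would exploit the nested chain defining a flag. Fix two distinct flags $\mathcal{F}, \mathcal{F}' \in \mathcal{C}$; the equal-cardinality assumption guarantees $\mathcal{F}_L \neq \mathcal{F}'_L$ and $\mathcal{F}_R \neq \mathcal{F}'_R$. Since $\mathcal{C}_L$ has maximum distance $2t_L$ and $2t_L \leqslant n$, we have $\mathcal{F}_L \cap \mathcal{F}'_L = \{0\}$; for every $i \leqslant L$ the inclusions $\mathcal{F}_i \subseteq \mathcal{F}_L$ and $\mathcal{F}'_i \subseteq \mathcal{F}'_L$ give $\mathcal{F}_i \cap \mathcal{F}'_i = \{0\}$, hence $d_S(\mathcal{F}_i,\mathcal{F}'_i) = 2t_i$. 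Dually, the maximum distance $2(n-t_R)$ of $\mathcal{C}_R$ combined with $2t_R \geqslant n$ yields $\mathcal{F}_R + \mathcal{F}'_R = \mathbb{F}_q^n$, and for every $i \geqslant R$ the inclusions $\mathcal{F}_R \subseteq \mathcal{F}_i$, $\mathcal{F}'_R \subseteq \mathcal{F}'_i$ propagate this to $\mathcal{F}_i + \mathcal{F}'_i = \mathbb{F}_q^n$, whence $d_S(\mathcal{F}_i,\mathcal{F}'_i) = 2(n-t_i)$. Summing these contributions over all indices $i$ recovers $D^{(\mathbf{t},n)}$.

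The step I expect to require the most care is confirming that the two propagation arguments together cover every index of $\mathbf{t}$. When $\frac{n}{2}$ is itself a dimension of $\mathbf{t}$, one has $L = R$ and both contributions agree on that index, so no term is double-counted; when all dimensions lie strictly on one side of $\frac{n}{2}$ only one of $L, R$ is defined, and a short bookkeeping check is needed to make sure the surviving propagation still reaches every index. Apart from this case analysis, the argument reduces entirely to the dimension formula (\ref{F2.1}) and to the trivial monotonicity of $\cap$ and $+$ under inclusion.
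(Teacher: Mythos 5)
This lemma is cited by the paper from \cite[Theorem~4.8]{NS1} without an in-text proof, so there is no argument in the paper to compare your proposal against. Your argument is correct and is the standard one: for $(1)\Rightarrow(2)$ you correctly observe that $D^{(\mathbf{t},n)}=\sum_i \min\{2t_i,2(n-t_i)\}$ is the termwise sum of the per-index maxima, so the extremal flag distance forces each summand, in particular at $L$ and $R$, to attain its maximum (and positivity of those maxima forces distinctness of the $L$-th and $R$-th projections); for $(2)\Rightarrow(1)$ the trivial intersection $\mathcal{F}_L\cap\mathcal{F}'_L=\{0\}$ propagates down the chain and the spanning sum $\mathcal{F}_R+\mathcal{F}'_R=\mathbb{F}_q^n$ propagates up, and the case analysis on $L=R$, $R=L+1$, or one of $L,R$ undefined (as laid out just before Lemma~\ref{L2.4}) confirms every index is reached.
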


\begin{lemma}\rm\cite [Corollary 3.6]{AN3}\label{L2.9}
Let $\mathcal{C}$ be a flag code of type $\textbf{t} = (t_1,\dots,t_r)$ on $\mathbb{F}_q^n$, and suppose that $|\mathcal{C}| = |\mathcal{C}_1| = \cdots = |\mathcal{C}_r|$. Then $\mathcal{C}$ is a QODFC if and only if, the following conditions hold:
\begin{itemize}
\item[(1)] $\mathcal{C}_{L-1}$ and $\mathcal{C}_{R+1}$ are subspace codes of maximum distance,
\item[(2)] $\mathcal{C}_{(L,R)}$ is a QODFC.	
\end{itemize}
\end{lemma}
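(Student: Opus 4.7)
\noindent\emph{Proof plan.} The plan is to exploit the nested flag structure to localize any distance defect of a QODFC to the central levels $L$ and $R$. Under the hypothesis $|\mathcal{C}| = |\mathcal{C}_i|$ for every $i$, the projection $\mathcal{F} \mapsto \mathcal{F}_i$ is a bijection, so distances can be read off level by level. First I would introduce the pointwise deficit
\[
\delta_i(\mathcal{F}, \mathcal{F}') := \min\{2t_i,\, 2(n - t_i)\} - d_S(\mathcal{F}_i, \mathcal{F}'_i),
\]
which by (\ref{F2.1}) equals $2\dim(\mathcal{F}_i \cap \mathcal{F}'_i)$ when $t_i \leqslant n/2$ and $2(n - \dim(\mathcal{F}_i + \mathcal{F}'_i))$ when $t_i \geqslant n/2$. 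The nesting $\mathcal{F}_i \subsetneq \mathcal{F}_{i+1}$ then yields $\mathcal{F}_i \cap \mathcal{F}'_i \subseteq \mathcal{F}_{i+1} \cap \mathcal{F}'_{i+1}$ and $\mathcal{F}_i + \mathcal{F}'_i \subseteq \mathcal{F}_{i+1} + \mathcal{F}'_{i+1}$, so $\delta_i$ is nondecreasing in $i$ on $\{1, \dots, L\}$ and nonincreasing on $\{R, \dots, r\}$. This monotone-deficit principle is the engine of both implications.

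For the direct implication, I would assume $d_f(\mathcal{C}) = D^{(\mathbf{t}, n)} - 2$, so for every pair of distinct flags $\sum_i \delta_i(\mathcal{F}, \mathcal{F}') \leqslant 2$, with equality for at least one pair. Any positive deficit at an outer index $i \leqslant L-1$ would, by monotonicity, force $\delta_L(\mathcal{F}, \mathcal{F}') \geqslant \delta_i(\mathcal{F}, \mathcal{F}') \geqslant 2$, giving a total deficit $\geqslant 4$, a contradiction. Symmetrically no positive deficit can occur at any $i \geqslant R+1$. Hence every outer projection $\mathcal{C}_i$ actually has maximum distance on all pairs, which in particular gives condition~(1) at $i=L-1$ and $i=R+1$; all remaining deficit lives in $\mathcal{C}_{(L,R)}$, so $d_f(\mathcal{C}_{(L,R)}) = D^{(\mathbf{t}_{(L,R)}, n)} - 2$, which is~(2).

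For the converse I would assume (1) and (2) and run the monotonicity principle backwards: since $\delta_{L-1} = 0 = \delta_{R+1}$ for every pair, monotonicity forces $\delta_i = 0$ for every outer $i$. The flag distance then decomposes cleanly as
\[
d_f(\mathcal{F}, \mathcal{F}') = \bigl(D^{(\mathbf{t},n)} - D^{(\mathbf{t}_{(L,R)}, n)}\bigr) + d_f\bigl((\mathcal{F}_L, \mathcal{F}_R),\, (\mathcal{F}'_L, \mathcal{F}'_R)\bigr).
\]
Because the projection to $\mathcal{C}_{(L,R)}$ is still a bijection, taking the minimum over pairs and invoking~(2) yields $d_f(\mathcal{C}) = D^{(\mathbf{t},n)} - 2$, so $\mathcal{C}$ is a QODFC.

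The hard part will be a clean and uniform treatment of the edge cases: $L=1$ (no left-outer levels), $R=r$ (no right-outer levels), and especially $L=R=n/2$, where the ``inner'' code $\mathcal{C}_{(L,R)}$ collapses to a single-level constant dimension code and being a QODFC should be interpreted as having subspace distance exactly $n - 2$. Once the monotone-deficit lemma is isolated in the correct generality, each case becomes a short bookkeeping check, and no other ingredient beyond (\ref{F2.1}) and the definitions is needed.
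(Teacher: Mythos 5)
The paper does not prove this lemma: it is imported verbatim from \cite[Corollary~3.6]{AN3}, so there is no internal proof to compare your attempt against. That said, your proposed argument is sound and likely mirrors the one in \cite{AN3}, which also controls the distance of a flag code through the behaviour of its projected codes on either side of the middle dimensions.

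Your central device — the pointwise deficit $\delta_i(\mathcal{F},\mathcal{F}') = \min\{2t_i, 2(n-t_i)\} - d_S(\mathcal{F}_i,\mathcal{F}'_i)$, rewritten as $2\dim(\mathcal{F}_i\cap\mathcal{F}'_i)$ for $i\leqslant L$ and $2(n-\dim(\mathcal{F}_i+\mathcal{F}'_i))$ for $i\geqslant R$, together with the monotonicity coming from $\mathcal{F}_i\cap\mathcal{F}'_i\subseteq\mathcal{F}_{i+1}\cap\mathcal{F}'_{i+1}$ and $\mathcal{F}_i+\mathcal{F}'_i\subseteq\mathcal{F}_{i+1}+\mathcal{F}'_{i+1}$ — is exactly the right engine, and it is the same mechanism the present paper uses in its own Lemma~\ref{L3.1}, so the proposal is stylistically consistent with the surrounding text. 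The forward direction correctly uses that a positive outer deficit at $i\leqslant L-1$ is even, hence $\geqslant 2$, and propagates inward to give $\delta_i+\delta_L\geqslant 4$, exceeding the allowed budget of $2$; the backward direction correctly uses that $\delta_{L-1}=\delta_{R+1}=0$ propagate outward to kill all outer deficits, after which the flag distance cleanly splits as the constant outer contribution plus the $(L,R)$-part. The bijectivity of the level-wise projections, supplied by $|\mathcal{C}|=|\mathcal{C}_i|$, is used in both directions and you invoke it appropriately.

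Two points you should spell out rather than defer. First, in the forward direction it is not enough to show the outer deficit is $\leqslant 0$ for each pair; you must also observe that distinct flags project to distinct subspaces at level $L-1$ (resp.\ $R+1$), which follows from $|\mathcal{C}|=|\mathcal{C}_{L-1}|$, so that ``$\delta_{L-1}=0$ for every pair of distinct flags'' really does mean $\mathcal{C}_{L-1}$ attains maximum distance as a constant dimension code. Second, the edge cases you flag ($L=1$, $R=r$, and $t_L=t_R=n/2$) are not cosmetic: when $L=R$ the inner ``flag code'' $\mathcal{C}_{(L,R)}$ is a single constant dimension code, and the definition of QODFC must be read as $d_S=n-2$; when $L=1$ or $R=r$ the corresponding half of condition~(1) must be dropped, and your monotonicity argument should be checked to degenerate gracefully (it does, since there is simply no index to propagate to). These are exactly the checks one would need to make the proof airtight, and your plan correctly identifies them, but a complete write-up must carry them out.
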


\section{Characterizations for full flag codes}
This section will systematically examine the relationship between the cardinality and distance of full flag codes, as well as their projected codes. Additionally, it will derive an upper bound on the cardinality for a specific class of full flag codes.

As shown in Lemma \ref{L2.4} and \ref{L2.9}, if the flag code $\mathcal{C}$ is an ODFC or a QODFC, the projected codes $\mathcal{C}_i$ have a significant impact on both the distance and cardinality of $\mathcal{C}$. This influence persists for other flag codes as well. The following lemma, as a generalization of \cite[Theorem 5.7]{AN3}, explicitly shows this influence.

\begin{lemma}\label{L3.1}
Let $\mathcal{C}$ be a full flag code of type $\textbf{t} = (1,\dots,n-1)$ on $\mathbb{F}_q^n$ with distance $d_f(\mathcal{C})=D^{(n)}-2l$, where $0\leqslant l< \frac{n-1}{2}$ is an integer. Then 
\begin{itemize}
\item[(1)] the distance of the projected code $\mathcal{C}_i$ is maximized, i.e., $d_S(\mathcal{C}_i)=\min\{2i,2(n-i)\}$ for all $i\in \{1,2,\dots,L-l,R+l,\dots,n-1\}$, and
\end{itemize}
\begin{itemize}
\item[(2)] $|\mathcal{C}_i|=|\mathcal{C}|$ for all $i\in \{1,\dots,n-1\}$.	
\end{itemize}

\end{lemma}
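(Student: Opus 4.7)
The plan is to analyze, for any two distinct flags $\mathcal{F}, \mathcal{F}' \in \mathcal{C}$, the sequence of intersection dimensions $a_j := \dim(\mathcal{F}_j \cap \mathcal{F}'_j)$ for $j = 0, 1, \dots, n$. Standard dimension counting gives $a_0 = 0$, $a_n = n$, $a_{j+1} \geq a_j$, and $a_{j+1} - a_j \leq 2$, since enlarging each of $\mathcal{F}_j, \mathcal{F}'_j$ by one dimension can increase the intersection dimension by at most $2$. I then define the per-position deficit $b_j := \tfrac{1}{2}(\min\{2j, 2(n-j)\} - d_S(\mathcal{F}_j, \mathcal{F}'_j))$, which equals $a_j$ for $j \leq L$ and $a_j - (2j - n)$ for $j \geq R$. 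The hypothesis $d_f(\mathcal{C}) = D^{(n)} - 2l$ yields $\sum_{j=1}^{n-1} b_j = (D^{(n)} - d_f(\mathcal{F}, \mathcal{F}'))/2 \leq l$, while the constraints on $a_j$ translate into $b_j$ being non-decreasing on $\{0, \dots, L\}$ and non-increasing on $\{R, \dots, n\}$.

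For part $(1)$, I fix $i \in \{1, \dots, L - l\}$ and a pair $\mathcal{F}, \mathcal{F}' \in \mathcal{C}$ with $\mathcal{F}_i \neq \mathcal{F}'_i$, and aim to show $d_S(\mathcal{F}_i, \mathcal{F}'_i) = 2i$. Assuming the contrary, $d_S(\mathcal{F}_i, \mathcal{F}'_i) \leq 2i - 2$, i.e.\ $b_i \geq 1$. Monotonicity on $\{i, i+1, \dots, L\}$ forces $b_j \geq 1$ there, a set of $L - i + 1 \geq l + 1$ indices, whence $\sum_j b_j \geq l + 1$, a contradiction. The symmetric case $i \in \{R + l, \dots, n-1\}$ follows either by repeating the count on the non-increasing piece of $b_j$, or by passing to the dual flag code $\mathcal{C}^\perp := \{(\mathcal{F}_{n-1}^\perp, \dots, \mathcal{F}_1^\perp) : \mathcal{F} \in \mathcal{C}\}$, which preserves all flag distances and swaps the indices $i$ and $n-i$.

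For part $(2)$, I show that the projection $\mathcal{C} \to \mathcal{C}_i$ is injective. Suppose $\mathcal{F} \neq \mathcal{F}'$ in $\mathcal{C}$ have $\mathcal{F}_i = \mathcal{F}'_i$; by the same duality it suffices to treat $i \leq L$, so that $a_i = i$ and hence $b_i = i \geq 1$. Monotonicity yields $b_j \geq i$ for $j \in \{i, \dots, L\}$, hence $\sum_j b_j \geq i(L - i + 1)$. The concave integer-valued map $x \mapsto x(L - x + 1)$ attains its minimum on $\{1, \dots, L\}$ at the endpoints, with common value $L$. From $l < (n-1)/2$ with $l \in \mathbb{Z}$ one checks $l \leq L - 1$ in both parities of $n$, so $L \geq l + 1$, contradicting $\sum_j b_j \leq l$. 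Hence the projection is injective and $|\mathcal{C}_i| = |\mathcal{C}|$.

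The conceptual core is identifying the monotone deficit $b_j$, which converts the inequality $d_f(\mathcal{F}, \mathcal{F}') \geq D^{(n)} - 2l$ into a bound on the total weight of a non-negative monotone integer sequence. Once this sequence is in place, both parts reduce to short pigeonhole estimates on the intervals $\{i, \dots, L\}$ or $\{R, \dots, i\}$; the only subtle point is the endpoint bound $l \leq L - 1$, which is exactly where the strict inequality $l < (n-1)/2$ enters to make the counts tight.
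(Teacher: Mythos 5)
Your proof is correct and follows essentially the same approach as the paper: both arguments track how a deficit at a single position $i$ (a nontrivial intersection for $i\leqslant L$, or a non-full sum for $i\geqslant R$) propagates monotonically through the nested chain to all positions between $i$ and the central index, and then count the affected positions to contradict the bound $d_f(\mathcal{C})\geqslant D^{(n)}-2l$. Your deficit sequence $b_j$ is a cleaner packaging of the same pigeonhole count that the paper carries out directly on $d_S(\mathcal{F}_j,\mathcal{F}'_j)$, and the endpoint check $l\leqslant L-1$ matches the paper's use of $l<\frac{n-1}{2}\leqslant L$.
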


\begin{proof}
(1) Suppose the assertion is false, that $d_S(\mathcal{C}_i)$ is not at its maximum for some $i\in \{1,\dots,L-l,R+l,\dots,n-1\}$. Then we could find a pair of distinct full flags $\mathcal{F},\mathcal{F}'\in\mathcal{C}$, such that $d_S(\mathcal{F}_i,\mathcal{F}'_i)<\min\{2i,2(n-i)\}$. The proof will be divided into two steps.

Case 1: We first consider $i\leqslant L-l$. From (\ref{F2.1}) it shows that $\mathcal{F}_i\cap\mathcal{F}'_i\neq\{0\}$, and as $\mathcal{F}_i\subset\mathcal{F}_{i+1}$ and $\mathcal{F}'_i\subset\mathcal{F}'_{i+1}$, we have $\mathcal{F}_{i+1}\cap\mathcal{F}'_{i+1}\neq\{0\}$. This gives
\begin{align*}
	d_S(\mathcal{F}_{i+1},\mathcal{F}'_{i+1})&=2(i+1-\dim(\mathcal{F}_{i+1}\cap\mathcal{F}'_{i+1}))\\
	&\leqslant2(i+1-1)=2i\\
	&<\min\{2(i+1),2(n-(i+1))\}=2(i+1).
\end{align*}
This leads to $d_S(\mathcal{C}_{i+1})\leqslant2(i+1)-2$. In a similar manner, we can see that
\begin{align*}
	d_S(\mathcal{C}_{i+2})\leqslant2(i+2)-2,\dots,d_S(\mathcal{C}_L)\leqslant2L-2.
\end{align*}
According to (\ref{F2.7}), it can be concluded that
\begin{align*}
	d_f(\mathcal{C})\leqslant D^{(n)}-2(L-i+1)\leqslant D^{(n)}-2(l+1),
\end{align*}
which contradicts that $d_f(\mathcal{C})=D^{(n)}-2l$.

Case 2: We now turn to the case $i\geqslant R+l$. (\ref{F2.1}) leads to  $\mathcal{F}_i+\mathcal{F}'_i\neq\mathbb{F}_q^n$, and since $\mathcal{F}_{i-1}\subset\mathcal{F}_i$ and $\mathcal{F}'_{i-1}\subset\mathcal{F}'_i$, we thus get $\mathcal{F}_{i-1}+\mathcal{F}'_{i-1}\neq\mathbb{F}_q^n$. Hence
\begin{align*}
	d_S(\mathcal{F}_{i-1},\mathcal{F}'_{i-1})&=2(\dim(\mathcal{F}_{i-1}+\mathcal{F}'_{i-1})-(i-1))\\
	&\leqslant2(n-1-(i-1))=2(n-i)\\
	&<\min\{2(i-1),2(n-(i-1))\}=2(n-i+1).
\end{align*}
From this, we have $d_S(\mathcal{C}_{i-1})\leqslant2(n-i+1)-2$. As in the proof of Case 1, it follows that
\begin{align*}
	d_f(\mathcal{C})\leqslant D^{(n)}-2(i-R+1)\leqslant D^{(n)}-2(l+1),
\end{align*}
which is impossible.

It is sufficient to prove that $d_S(\mathcal{C}_i)$ is maximized for every $i\in \{1,2,\dots,L-l,R+l,\dots,n-1\}$, which completes the proof.

(2) On the contrary, suppose that $|\mathcal{C}_i|<|\mathcal{C}|$ for some $i\in \{1,\dots,n-1\}$. Then there exist two distinct flags $\mathcal{F}=(\mathcal{F}_1,\cdots,\mathcal{F}_i,\cdots,\mathcal{F}_{n-1}),\mathcal{F}'=(\mathcal{F}'_1,\cdots,\mathcal{F}'_i,\cdots,\mathcal{F}'_{n-1})\in\mathcal{C}_i$ such that $\mathcal{F}_i=\mathcal{F}'_i$.

If $i\leqslant L$, there would be   
\begin{align*}   
	d_S(\mathcal{F}_i,\mathcal{F}'_i)=0,\  \textrm{and}\  d_S(\mathcal{F}_j,\mathcal{F}'_j)\leqslant 2(j-i)\quad\textrm{for all} \ j\in \{i+1,\dots,L\}.
\end{align*}
Hence
\begin{align*}
	d_f(\mathcal{C})\leqslant d_f(\mathcal{F},\mathcal{F}')\leqslant D^{(n)}-2i(L-i+1).
\end{align*}
Since $0\leqslant l< \frac{n-1}{2}\leqslant L$ and $i\leqslant L$, we have $2i(L-i+1)\geqslant 2L> 2l$, contrary to $d_f(\mathcal{C})=D^{(n)}-2l$.

Similar considerations apply to $i>L$, and this finishes the proof.	
\end{proof}

\begin{remark}
If $n$ is odd and $l=\frac{n-1}{2}=L$, then $|\mathcal{C}_i|$ may be less than $|\mathcal{C}|$ for $i=\{1,n-1\}$. Here we present an example to illustrate it.
\end{remark}

\begin{example}\label{E3.2}
Let $\mathbb{F}_q^n=\mathbb{F}_2^7=\langle e_1,\dots,e_7 \rangle$, where $\{e_1,\dots,e_7\}$ denotes the standard basis of $\mathbb{F}_2^7$. Subsequently, define full flags
\begin{align*}
\mathcal{F}^1=(&\langle e_1\rangle,\langle e_1,e_2\rangle,\langle e_1,e_2,e_3\rangle,\langle e_1,e_2,e_3,e_4\rangle,\langle e_1,\dots,e_5\rangle,\langle e_1,\dots,e_6\rangle),\\
\mathcal{F}^2=(&\langle e_1\rangle,\langle e_1,e_5\rangle,\langle e_1,e_5,e_6\rangle,\langle e_1,e_5,e_6,e_7\rangle,\langle e_1,e_4,e_5,e_6,e_7\rangle,\langle e_1,e_3\dots,e_7\rangle),\\
\mathcal{F}^3=(&\langle e_6+e_7\rangle,\langle e_6+e_7,e_3+e_5\rangle,\langle e_6+e_7,e_3+e_5,e_3+e_4+e_7\rangle,\langle e_1,e_6+e_7,e_3+e_5,e_3+e_4+e_7\rangle,\\
&\langle e_1,e_3,e_5,e_4+e_7,e_6+e_7\rangle,\langle e_1,e_3\dots,e_7\rangle).
\end{align*}

The distances of the flags are given by
\begin{align*}
d_f(\mathcal{F}^1,\mathcal{F}^2)&=0+2+4+6+4+2=18=D^{(7)}-2l,\\
d_f(\mathcal{F}^2,\mathcal{F}^3)&=2+4+6+4+2+0=18=D^{(7)}-2l,\\
d_f(\mathcal{F}^1,\mathcal{F}^3)&=2+4+6+6+4+2=24=D^{(7)},
\end{align*}
where $D^{(7)}=24$ and $l=3$. And the distance of the flag code $\mathcal{C}=\{\mathcal{F}^1,\mathcal{F}^2,\mathcal{F}^3\}$ is
\begin{align*}
d_f(\mathcal{C})=d_f(\mathcal{F}^1,\mathcal{F}^2)=18.
\end{align*}

Obviously, the cardinalities of the projected codes are
\begin{align*}
|\mathcal{C}_1|=|\mathcal{C}_6|=2<|\mathcal{C}|=3.
\end{align*} 	
\end{example}

Next we present an upper bound for the cardinality $|\mathcal{C}|$ of a specific class of full flag codes $\mathcal{C}$. We indicate that $|\mathcal{C}|$ is associated with the upper bound of a particular projected code.

\begin{theorem}\label{T3.5}
Set $n = 2k + r$ with $0 \leqslant r < k$ and $k \geqslant 2$. Let $\mathcal{C}$ denote a full flag code of type $\mathbf{t} = (1, \dots, n-1)$ over $\mathbb{F}_q^n$, with distance $d_f(\mathcal{C}) = \frac{n^2-r^2}{2}$ for $r \in \{0, 1, 2\}$. Then the cardinality of the code $\mathcal{C}$ satisfies $|\mathcal{C}| \leqslant A_q(n, 2k; k)$. Moreover, if $k>\frac{q^r - 1}{q - 1}$, then $|\mathcal{C}| \leqslant q^{k+r}+1$.
\end{theorem}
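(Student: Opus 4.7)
The plan is to reduce the theorem to a statement about the middle projected code $\mathcal{C}_k$, which will turn out to be a partial spread code in each of the three cases $r \in \{0,1,2\}$. Lemmas \ref{L2.1} and \ref{L2.2} will then deliver the two claimed bounds.

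First I would rewrite the hypothesis $d_f(\mathcal{C}) = \frac{n^2 - r^2}{2}$ in the canonical form $D^{(n)} - 2l$ required by Lemma \ref{L3.1}. A short case analysis on the parity of $n$ gives $l = 0$ when $r = 0$ (since $n = 2k$ and $D^{(n)} = n^2/2$) or when $r = 1$ (since $n = 2k+1$ and $D^{(n)} = (n^2-1)/2$), and $l = 1$ when $r = 2$ (since $n = 2k+2$, $D^{(n)} = n^2/2$, and the prescribed distance is $D^{(n)} - 2$). The assumption $k \geqslant 2$ ensures $l < (n-1)/2$ in every case, so Lemma \ref{L3.1} applies.

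Next I would verify that the index $i = k$ lies in the admissible set $\{1,\dots,L-l\} \cup \{R+l,\dots,n-1\}$ produced by Lemma \ref{L3.1}. Unwinding $L$ and $R$: for $r = 0$ one has $L = R = k$ with $l = 0$; for $r = 1$ one has $L = k$, $R = k+1$ with $l = 0$; for $r = 2$ one has $L = R = k+1$ with $l = 1$, so the set reduces to $\{1,\dots,k\} \cup \{k+2,\dots,n-1\}$. In every instance $k$ survives. Lemma \ref{L3.1} then forces $d_S(\mathcal{C}_k) = \min\{2k,2(n-k)\} = 2k$ (since $n-k = k+r \geqslant k$) together with $|\mathcal{C}| = |\mathcal{C}_k|$, so $\mathcal{C}_k$ is a partial $k$-spread code in $\mathcal{G}_q(k,n)$.

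From here the conclusions come for free: Lemma \ref{L2.1} yields $|\mathcal{C}| = |\mathcal{C}_k| \leqslant A_q(n,2k;k)$, which is the first bound. For the moreover clause, note that $r$ is exactly the remainder of $n = 2k + r$ upon division by $k$, so the extra hypothesis $k > (q^r - 1)/(q - 1)$ is precisely what Lemma \ref{L2.2} demands, and the simplification $q^n - q^{k+r} = q^{k+r}(q^k - 1)$ collapses its formula to $q^{k+r} + 1$. The main obstacle, if it can be called one, is purely bookkeeping: carefully identifying $l$ in each parity case and confirming that $k$ still lies in the prescribed index set, especially in the $r = 2$ case where the gap $\{k+1\}$ is excised from the admissible indices but $k$ itself survives. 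No new idea beyond Lemma \ref{L3.1} and the Grassmann bounds should be required.
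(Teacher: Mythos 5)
Your proof is correct, and it takes a mildly more streamlined route than the paper's. The paper splits into three cases by the value of $r$: for $r=0$ and $r=1$ it recognizes $\mathcal{C}$ as an ODFC and invokes Lemma~\ref{L2.4} (the ODFC characterization via $\mathcal{C}_L$ and $\mathcal{C}_R$) to conclude $|\mathcal{C}| = |\mathcal{C}_k| \leqslant A_q(n,2k;k)$, and only for $r=2$ does it fall back on Lemma~\ref{L3.1} with $l=1$. You instead apply Lemma~\ref{L3.1} uniformly across all three cases, reading off $l \in \{0,0,1\}$ respectively; since Lemma~\ref{L3.1} allows $l=0$, this absorbs the ODFC cases without invoking the separate characterization. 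Your verification that $k$ survives in the admissible index set $\{1,\dots,L-l\} \cup \{R+l,\dots,n-1\}$ is the right thing to check in each parity case, and the bookkeeping is correct (in the $r=2$ case the paper uses $L-1 = k$ exactly as you do). The final step via Lemma~\ref{L2.2} matches the paper. The only thing worth noting is that for $r=2$ the constraint $r < k$ already forces $k \geqslant 3$, so the $k\geqslant 2$ hypothesis is doing work only in the $r \in \{0,1\}$ cases, but this does not affect the validity of either argument.
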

\begin{proof}
Our proof starts with the discussion on the value of $r$, and is structured into three parts.

Case 1: Consider $r=0$, then $n=2k+r=2k$ is even. Since $D^{(n)}=\frac{n^2}{2}=\frac{n^2-r^2}{2}=d_f(\mathcal{C})$ by (\ref{F2.9}), the full flag code $\mathcal{C}$ constitutes an ODFC.
According to Lemma \ref{L2.4}, the $L$-th projected code $\mathcal{C}_L=\mathcal{C}_k$ achieves the maximum distance $2L$ with cardinality $|\mathcal{C}_L| = |\mathcal{C}|$. We thus get $|\mathcal{C}| = |\mathcal{C}_L| = |\mathcal{C}_k| \leqslant A_q(n, 2k; k)$.
	
Case 2: Consider $r = 1$, then $n = 2k + 1$ is odd, and $D^{(n)} = \frac{n^2 - 1}{2} = d_f(\mathcal{C})$, which implies that $\mathcal{C}$ is also an ODFC with $L = \frac{n-1}{2} = k$ and $R=\frac{n + 1}{2} = k + 1$.
In the same manner, we can see that $|\mathcal{C}| = |\mathcal{C}_L| = |\mathcal{C}_k| \leqslant A_q(n, 2k; k)$.
	
Case 3: We now turn to the case $r=2$. Since $n = 2k + 2$ is odd, it follows that $L=R=k+1$, $D^{(n)} = \frac{n^2}{2}$, and $d_f(\mathcal{C})=\frac{n^2-r^2}{2}=D^{(n)}-2=D^{(n)}-2l$ with $l=1$. By Lemma \ref{L3.1}, the value of $d_S(\mathcal{C}_i)$ is maximized for $i\in \{1,2,\dots,L-l,R+l,\dots,n-1\}$, and $|\mathcal{C}|=|\mathcal{C}_{L-1}|$. Consequently, we deduce that $|\mathcal{C}|=|\mathcal{C}_{L-1}|\leqslant  A_q(n,2(L-1);L-1)=A_q(n,2k;k)$.

According to Lemma \ref{L2.2}, if $k>\frac{q^r - 1}{q - 1}$, then
\begin{align*}
A_q(n,2k;k)=\frac{q^n-q^{k+r}}{q^k-1}+1=\frac{q^{2k+r}-q^{k+r}}{q^k-1}+1=q^{k+r}+1.
\end{align*}
Hence $|\mathcal{C}|\leqslant q^{k+r}+1$.
\end{proof}

In the following section, we will systematically present the constructions of full flag codes $\mathcal{C}$ with distance $d_f(\mathcal{C}) = \frac{n^2-r^2}{2}$ and cardinality $|\mathcal{C}|=q^{k+r}+1$ for every $r \geqslant 0$.

\section{The \textquotedblleft sandwich" constructions}
This section focuses on developing a class of constructions for full flag codes based on partial spreads. This class includes optimum distance flag codes, quasi-optimum distance flag codes, and flag codes with other possible distance values. The structure of these flag codes resembles that of a \textquotedblleft sandwich", consisting of three layers of matrices.

At the beginning of this section, we introduce several notations used throughout this paper. We start by constructing a partial $k$-spread in the space $\mathbb{F}_q^n$.

\begin{lemma}\rm\cite [Ch.\  2.5]{RH1}\label{L4.1}
	\ Let $q$ be a prime power and $\mathbb{F}_q$ the finite field with $q$ elements. Choose a primitive  polynomial $p=x^k+\sum_{i=0}^{k-1}p_ix^i\in\mathbb{F}_q[x]$ of degree $k\geqslant 1$. Define the \textit{companion matrix} of $p$ as
	\begin{align}\label{LA3.1}
		\textbf{M}(p):=\begin{pmatrix}
			0 & 1 & 0 & \cdots & 0\\
			0 & 0 & 1 &   & 0\\
			\vdots&  &   &\ddots& \vdots\\
			0 & 0 & 0 &   & 1\\
			-p_0 & -p_1 & -p_2 &\cdots   & -p_{k-1}\\
		\end{pmatrix}.
	\end{align}
	Then the $\mathbb{F}_q-$ algebra $\mathbb{F}_q[\mathbf{M}(p)]$ is a finite field with $q^{k}$ elements $\{\textbf{O},\textbf{M}(p),\textbf{M}(p)^2,\dots,\textbf{M}(p)^{q^k-1}\}$.
\end{lemma}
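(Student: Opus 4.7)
The plan is to identify $\mathbb{F}_q[\mathbf{M}(p)]$ with the quotient ring $\mathbb{F}_q[x]/(p(x))$ via the standard evaluation homomorphism, and then use the primitivity of $p$ to list the nonzero elements as powers of $\mathbf{M}(p)$.

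First I would verify that the minimal polynomial of $\mathbf{M}(p)$ is exactly $p$. This is a classical property of companion matrices: reading off the last row of $\mathbf{M}(p)$ shows $p(\mathbf{M}(p)) = \mathbf{O}$, while the vectors $e_1, \mathbf{M}(p) e_1, \ldots, \mathbf{M}(p)^{k-1} e_1$ form the standard basis of $\mathbb{F}_q^k$, so no polynomial of degree less than $k$ annihilates $\mathbf{M}(p)$.

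Next I would consider the ring homomorphism $\varphi : \mathbb{F}_q[x] \to \mathbb{F}_q[\mathbf{M}(p)]$ sending $x \mapsto \mathbf{M}(p)$. By the previous step, the kernel is the principal ideal $(p(x))$. Since $p$ is primitive, it is in particular irreducible, so $\mathbb{F}_q[x]/(p(x))$ is a field with exactly $q^k$ elements. The first isomorphism theorem then yields $\mathbb{F}_q[\mathbf{M}(p)] \cong \mathbb{F}_q[x]/(p(x))$, so $\mathbb{F}_q[\mathbf{M}(p)]$ is a finite field of cardinality $q^k$.

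Finally, I would invoke the definition of a primitive polynomial: a root of $p$ generates the multiplicative group of $\mathbb{F}_{q^k}$, which is to say that the image of $x$ has multiplicative order $q^k - 1$ in $\mathbb{F}_q[x]/(p(x))$. Transporting this through the isomorphism, $\mathbf{M}(p)$ is a generator of the cyclic group $\mathbb{F}_q[\mathbf{M}(p)]^{\times}$ of order $q^k - 1$, hence the nonzero elements are precisely $\mathbf{M}(p), \mathbf{M}(p)^2, \ldots, \mathbf{M}(p)^{q^k - 1}$; appending $\mathbf{O}$ gives the full list of $q^k$ elements. The only step requiring genuine care is the first one (minimal polynomial equals $p$), as everything else is a formal consequence; the potential obstacle is merely bookkeeping in verifying $p(\mathbf{M}(p)) = \mathbf{O}$ directly from the block shape in \eqref{LA3.1}, which is standard and can be cited from any linear algebra text.
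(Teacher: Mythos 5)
The paper does not prove this lemma; it cites it verbatim from Lidl--Niederreiter \cite[Ch.\ 2.5]{RH1}, so there is no authors' proof to compare against. Your argument is the standard textbook one and it is correct: establish that the minimal polynomial of $\mathbf{M}(p)$ equals $p$ (so $\mathbb{F}_q[\mathbf{M}(p)]\cong\mathbb{F}_q[x]/(p)$ via evaluation), use irreducibility of the primitive polynomial $p$ to conclude this quotient is a field of order $q^k$, and then use primitivity to see that the residue class of $x$ --- hence $\mathbf{M}(p)$ --- has multiplicative order $q^k-1$, giving the stated enumeration with $\mathbf{M}(p)^{q^k-1}=I$.

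One small bookkeeping point to watch: with the paper's convention in \eqref{LA3.1} (ones on the superdiagonal, coefficients along the bottom row), it is the \emph{row} vector $e_1$ that is cyclic under right multiplication, i.e.\ $e_1, e_1\mathbf{M}(p),\dots,e_1\mathbf{M}(p)^{k-1}$ is the standard basis; this is the orientation the paper itself uses in Lemma \ref{L4.2}. As you wrote it, $\mathbf{M}(p)e_1=-p_0 e_k$ rather than $e_2$, so the \emph{column} vectors $e_1,\mathbf{M}(p)e_1,\dots,\mathbf{M}(p)^{k-1}e_1$ are still a basis (because $p_0=p(0)\ne 0$ for a primitive polynomial, giving a triangular pattern) but not the standard one. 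Either orientation yields the same conclusion that no nonzero polynomial of degree $<k$ annihilates $\mathbf{M}(p)$, so the proof goes through; just adjust the phrasing to match the paper's row-vector convention.
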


In this paper, the primitive polynomial $p$ is fixed and $\textbf{M}(p)$ is briefly denoted by $M$ in the sequel.

The following lemma presents some properties of companion matrix $M$, which will be utilized in the subsequent discussions.

\begin{lemma}\label{L4.2}
Let $M^i$ be the $i-$th power of the companion matrix $M$ for $i = 1, \dots, q^k - 1$.
\begin{itemize}
\item[(1)] Let $v_1, v_2, \dots, v_k$ denote the row vectors of the companion matrix $M$ respectively. Then the row vectors of the matrix $M^i$ can be expressed as $v_1 M^{i-1}, v_2 M^{i-1}, \dots, v_k M^{i-1}$ respectively.
\end{itemize}
\begin{itemize}
\item[(2)] Let $v$ denote the first row vector of $M^i$. Then the row vectors of the matrix $M^i$ are precisely $v, vM, \dots, vM^{k-1}$.
\end{itemize}
\end{lemma}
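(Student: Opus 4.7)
The plan is to treat part (1) as an immediate unpacking of the definition of matrix multiplication. Writing $M^i = M \cdot M^{i-1}$ and viewing $M$ as a vertical stack of the rows $v_1, \ldots, v_k$, the row-by-column rule gives that the $j$-th row of the product $M \cdot M^{i-1}$ is exactly the row vector $v_j M^{i-1}$. I would dispose of this in one sentence; no structural information about $M$ being a companion matrix is needed here.

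For part (2) the strategy is to exploit the explicit shape of $M$ displayed in (\ref{LA3.1}). Direct inspection shows that $v_j = e_{j+1}$ for $j = 1, \ldots, k-1$, where $e_\ell$ denotes the $\ell$-th standard basis row vector of $\mathbb{F}_q^k$, while $v_k = (-p_0, \ldots, -p_{k-1})$. Since $e_j M$ returns the $j$-th row of $M$, this yields $v_j M = v_{j+1}$ for all $j = 1, \ldots, k-1$, including the boundary case $v_{k-1} M = e_k M = v_k$. A short induction on $j$ then gives $v_j = v_1 M^{j-1}$ for every $1 \leqslant j \leqslant k$.

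To finish, I would combine the two observations and invoke the commutativity of powers of $M$. By part (1), the first row of $M^i$ is $v = v_1 M^{i-1}$, and the $j$-th row of $M^i$ is
\[
v_j M^{i-1} \;=\; (v_1 M^{j-1}) M^{i-1} \;=\; v_1 M^{i-1} M^{j-1} \;=\; v M^{j-1},
\]
which is precisely the claimed expression. The only step that requires any actual computation is the identification $v_j M = v_{j+1}$ from the explicit entries of $M$; this is the \emph{de facto} main obstacle, but it is elementary and I expect no genuine difficulty. The rest of the argument is bookkeeping around associativity and commutativity of matrix powers.
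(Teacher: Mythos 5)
Your proof is correct and is exactly the kind of elementary verification the paper has in mind: the authors simply assert the lemma is ``straightforward according to the definition of companion matrix $M$'' and give no details. Your argument supplies those details in the natural way --- the row-of-a-product observation $M^i = M\cdot M^{i-1}$ for part (1), and the shift relation $v_j M = v_{j+1}$ for $j=1,\ldots,k-1$ (hence $v_j = v_1 M^{j-1}$) combined with commutativity of powers for part (2). One tiny point worth being aware of, though it does not affect your argument: Definition~\ref{D4.2} sets $A^0 := O_k$, and under that convention the statement of the lemma would fail at $i=1$; the convention is clearly meant only for the block notation $(M^{i-2})^{(k_1)}$ in Proposition~\ref{P4.4}, and here the ordinary convention $M^0 = I_k$, which you implicitly use, is the intended one.
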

\begin{proof}
The proof is straightforward according to the definition of companion matrix $M$.	
\end{proof}

\begin{definition}\label{D4.2}
Let $A$ be a $k \times k$ matrix over $\mathbb{F}_q$, and let $1 \leqslant t \leqslant k$ be an integer. Denote by $A^{(t)}$ the first $t$ rows of $A$, and by $A_{(t)}$ the last $t$ rows of $A$. Set the zeroth power of $A$ as $A^{0} := O_k$, where $O_k$ is the $k \times k$ zero matrix.
\end{definition}

\begin{remark}\label{R4.3}
According to Lemma  \ref{L4.1} and Definition \ref{D4.2}, the companion matrix $M$ of a primitive polynomial $p$ of degree $k\geqslant 1$ over $\mathbb{F}_q$ satisfies

\begin{align}
M^i=\left\{ \begin{array}{ll}
	O_k,   &  i=0,\\
	M^i,   &  i=1,\dots, q^k-2,\\
	I_{k}, &  i=q^k-1. \\
\end{array}	
\right.	
\end{align}
\end{remark}

In what follows, we describe the construction of a partial spread on $\mathbb{F}_q^n$, which serves as the foundation for this section. Specifically, it will function as both the first and third layers in the \textquotedblleft sandwich" structure.

From now on, let $n = 2k_1 + r$ with $0 \leqslant r < k_1$ and set $k_2 = k_1 + r$, then $n=k_1+k_2$.

\begin{proposition}\label{P4.4}
Let $M$ be the companion matrix of a fixed primitive polynomial $p$ of degree $k_2 \geqslant 1$. Define the matrices
\begin{equation}
	\begin{aligned}\label{LA3.2.1}
		A[1]&:=\begin{bmatrix}
			O_{k_1} & I_{k_1} & O_{k_1\times r}\\
		\end{bmatrix}_{k_1\times n},\\
		A[i]&:=\begin{bmatrix}
			I_{k_1} & (M^{i-2})^{(k_1)}\\
		\end{bmatrix}_{k_1\times n},\quad i=2,\dots,q^{k_2}+1,
	\end{aligned}
\end{equation}
and the set $\mathbf{A}:= \{A[i]\ |\ i=1,\dots,q^{k_2}+1\}$.

Denote by $\mathcal{A}[i]$ the subspace generated by the matrix $A[i]$, i.e., $\mathcal{A}[i] = \mathrm{rowsp}(A[i])$. Then the set
\begin{align}
	\mathfrak{A} := \{\mathcal{A}[i] \mid i = 1, \dots, q^{k_2} + 1\} \subseteq \mathcal{G}_q( k_1,n)
\end{align}
satisfies the following properties:
\begin{itemize}
\item [(1)] $\mathcal{A}[i]$ is a $k_1$-dimensional subspace of $\mathbb{F}_q^n$ for any $i \in \{1, \dots, q^{k_2} + 1\}$,
\end{itemize}
\begin{itemize}
\item [(2)] $\mathcal{A}[i] \cap \mathcal{A}[j] = \{0\}$ for any $i \neq j \in \{1, \dots, q^{k_2} + 1\}$,
\end{itemize}
\begin{itemize}
\item [(3)] there exist no set $\mathfrak{A}' \subseteq \mathcal{G}_q(k_1,n)$ such that $\mathfrak{A}$ is a proper subset of $\mathfrak{A}'$. Hence, $\mathfrak{A}$ is a partial $k_1$-spread of $\mathbb{F}_q^n$.
\end{itemize}
\end{proposition}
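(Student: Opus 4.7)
The three claims can be attacked in order of increasing difficulty. For (1), each $A[i]$ visibly contains an $I_{k_1}$ block, so it has full row rank $k_1$, and thus $\dim\mathcal{A}[i]=k_1$. For (2), if $xA[i]=yA[j]$ with $i<j$, I compare coordinates blockwise: when $i=1$, the left $k_1$ coordinates immediately force $y=0$; when $i\geq 2$, the left block gives $x=y$, and the right block then gives $x\cdot(M^{i-2}-M^{j-2})^{(k_1)}=0$. The key observation here is that $M^{i-2}-M^{j-2}$ is a nonzero element of the field $\mathbb{F}_q[M]$ (the exponents $i-2,j-2$ are distinct in $\{0,\dots,q^{k_2}-1\}$ and Lemma~\ref{L4.1} identifies these with distinct field elements), hence invertible as a $k_2\times k_2$ matrix; its first $k_1$ rows are therefore linearly independent, forcing $x=0$.

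The main work is part (3), which I would obtain as a corollary of the sharper claim that every $v=(v_1,v_2)\in\mathbb{F}_q^{k_1}\oplus\mathbb{F}_q^{k_2}$ with $v_1\neq 0$ lies in some $\mathcal{A}[i]$ with $i\geq 2$. Setting $\tilde v_1:=(v_1,0_r)\in\mathbb{F}_q^{k_2}$, a direct check gives $v_1\cdot(M^{i-2})^{(k_1)}=\tilde v_1\cdot M^{i-2}$, so the question becomes whether the map $\phi:\mathbb{F}_q[M]\to\mathbb{F}_q^{k_2}$, $\phi(N)=\tilde v_1\cdot N$, is surjective. Since $\mathbb{F}_q[M]$ is a field and $\tilde v_1\neq 0$, $\phi$ has trivial kernel (any nonzero $N$ is invertible, so $\phi(N)=0$ would force $\tilde v_1=0$), and since source and target both have $\mathbb{F}_q$-dimension $k_2$, $\phi$ is bijective. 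Invoking the paper's convention $M^0=O_{k_2}$, the exponents $i-2$ with $i\in\{2,\dots,q^{k_2}+1\}$ enumerate all of $\mathbb{F}_q[M]$, so $\tilde v_1\cdot M^{i-2}$ attains $v_2$ for a unique $i$.

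Given this covering fact, any hypothetical $\mathcal{A}'\in\mathcal{G}_q(k_1,n)$ disjoint from every $\mathcal{A}[i]$ must consist of vectors whose first block vanishes, i.e., $\mathcal{A}'\subseteq V:=\{(0,w):w\in\mathbb{F}_q^{k_2}\}$. Since $\mathcal{A}[1]\subseteq V$ as well, the disjointness $\mathcal{A}'\cap\mathcal{A}[1]=\{0\}$ combined with the Grassmann formula gives $2k_1=\dim\mathcal{A}'+\dim\mathcal{A}[1]\leq\dim V=k_1+r$, contradicting the standing hypothesis $r<k_1$. The main subtlety throughout is the nonstandard convention $M^0=O_{k_2}$, which is precisely what makes the enumeration in the covering step run over the full field $\mathbb{F}_q[M]$ rather than only its multiplicative group; I would flag this explicitly in the write-up so the bijectivity of $\phi$ translates cleanly into coverage by the $\mathcal{A}[i]$.
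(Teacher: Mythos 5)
Your proof is correct and follows essentially the same route as the paper. Parts (1) and (2) match the paper's argument closely: (1) uses the embedded $I_{k_1}$ block, and (2) reduces to the invertibility of the nonzero field element $M^{i-2}-M^{j-2}\in\mathbb{F}_q[M]$. For part (3), the paper simply asserts ``by calculation'' that $\mathbb{F}_q^n\setminus\bigcup_i\mathcal{A}[i]$ equals $V\setminus\mathcal{A}[1]$ with $V=\mathrm{rowsp}[O_{k_2\times k_1}\mid I_{k_2}]$, and then bounds dimensions exactly as you do; your write-up supplies the covering claim (every $(v_1,v_2)$ with $v_1\neq 0$ lies in some $\mathcal{A}[i]$ for $i\geq 2$, via the bijectivity of $N\mapsto\tilde v_1N$ on the field $\mathbb{F}_q[M]$) that the paper leaves implicit. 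Your emphasis on the convention $M^0=O_{k_2}$ (so that the exponents enumerate the full field, not just the multiplicative group) is exactly the point that makes the enumeration work, and it is worth flagging as you propose.
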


\begin{proof}
(1) Since every matrix $A[i]$ is in row-reduced echelon form and $\mathrm{Rank}(A[i]) = k_1$, the subspace $\mathcal{A}[i]$ is precisely a $k_1$-dimensional subspace of $\mathbb{F}_q^n$ for any $i \in \{1, \dots, q^{k_2} + 1\}$.

(2) Choose two distinct matrices $A[i]$ and $A[j]$ in the set $\mathbf{A}$.

If either $i = 1$ or $j = 1$, without loss of generality, assume $i = 1$. Since
\begin{align*}
	\begin{bmatrix}
		A[i]\\
		A[j]
	\end{bmatrix}=\begin{bmatrix}
		A[1]\\
		A[j]
	\end{bmatrix}=
	\begin{bmatrix}
		O_{k_1} &  I_{k_1} \  \ \ O_{k_1\times r}\\
		I_{k_1} & \ \  (M^{j-2})^{(k_1)} \ \ \\
	\end{bmatrix}_{2k_1\times n},
\end{align*}
it follows that $\mathcal{A}[1] \cap \mathcal{A}[j] = \{0\}$.

If both $i \neq 1$ and $j \neq 1$, then
\begin{align*}
	\begin{bmatrix}
		A[i]\\
		A[j]
	\end{bmatrix}=\begin{bmatrix}
		I_{k_1} & (M^{i-2})^{(k_1)} \\
		I_{k_1} & (M^{j-2})^{(k_1)} \\
	\end{bmatrix}_{2k_1\times n}
	\rightarrow\begin{bmatrix}
		I_{k_1} &  (M^{i-2})^{(k_1)} \\
		O_{k_1} &  (M^{j-2}-M^{i-2})^{(k_1)} \\
	\end{bmatrix}_{2k_1\times n}.
\end{align*}
By Lemma \ref{L4.1}, $(M^{j-2} - M^{i-2})$ is a nonzero element in the field $\mathbb{F}_q[M]$. Hence, $\mathrm{Rank}((M^{j-2} - M^{i-2})^{(k_1)}) = k_1$, which leads to the desired conclusion.

(3) By calculation, we obtain
\begin{align*}
	\mathbb{F}_q^n \setminus \{\oplus_{i=1}^{q^{k_2}+1} \mathcal{A}[i]\} =
	\text{rowsp}\begin{bmatrix}
		O_{k_2\times k_1} & I_{k_2}
	\end{bmatrix}_{k_2\times n} \setminus
	\text{rowsp}\begin{bmatrix}
		O_{k_1} & I_{k_1} & O_{k_1\times r}\\
	\end{bmatrix}_{k_1\times n} .
\end{align*}
Furthermore, the vectors in the set
\begin{align*}
	\{0\}\cup\text{rowsp}\begin{bmatrix}
		O_{k_2\times k_1} & I_{k_2}
	\end{bmatrix}_{k_2\times n} \setminus
	\text{rowsp}
	\begin{bmatrix}
		O_{k_1} & I_{k_1} & O_{k_1\times r}
	\end{bmatrix}_{k_1\times n}
\end{align*}
can span a subspace of $\mathbb{F}_q^n$ with dimension no greater than $r$. Given that $0 \leqslant r < k_1$, there exists no subspace of dimension $k_1$ within the set $(\mathbb{F}_q^n \setminus \{\oplus_{i=1}^{q^{k_2}+1} \mathcal{A}[i]\})\cup\{0\}$, which completes the proof.
\end{proof}

Next, we proceed to construct the second layer of the \textquotedblleft sandwich" structure.

\begin{proposition}\label{P4.5}
With the setup of Proposition \ref{P4.4}, define the matrices
\begin{equation}
	\begin{aligned}\label{LA3.3.1}
		B[1]:&=\begin{bmatrix}
			O_{r\times 2k_1} & I_{r}
		\end{bmatrix}_{r\times n},  \\
		B[2]:&=\begin{bmatrix}
			    O_{r\times k_1} & B
		\end{bmatrix}_{r\times n}, \\
		B[i]:&=\begin{bmatrix}
			O_{r\times k_1} & (M^{i-2})_{(r)}
		\end{bmatrix}_{r\times n},\ \ \
		i=3,\dots,q^{k_2}+1,
	\end{aligned}
\end{equation}
where $B$ is an $r \times k_2$ matrix of rank $r$, and $(M^i)_{(r)}$ represents the last $r$ rows of the $i$-th power of $M$. All such matrices form a set denoted by
\begin{align}\label{LA3.3.2}
	\mathbf{B}:= \{B[i]\mid i=1,\dots,q^{k_2}+1\}.
\end{align}
Furthermore, the rank of $B[i]$ is equal to $r$ for all $i = 1, \dots, q^{k_2}+1$.
\end{proposition}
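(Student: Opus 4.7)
The plan is to split into the three cases in the definition of $B[i]$ and to argue that in each case the nonzero right-hand block of $B[i]$ already has full row rank $r$; since padding with zero columns on the left cannot change the rank, this will suffice.

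The two distinguished cases $i=1$ and $i=2$ are immediate: the nonzero block of $B[1]$ is $I_r$, which visibly has rank $r$, and the nonzero block of $B[2]$ is $B$, whose rank is $r$ by hypothesis. The substantive case is $i \in \{3,\dots,q^{k_2}+1\}$, where the nonzero block is $(M^{i-2})_{(r)}$, the last $r$ rows of $M^{i-2}$. My plan here is to invoke Lemma \ref{L4.1}: the index $i-2$ ranges over $\{1,\dots,q^{k_2}-1\}$, so $M^{i-2}$ is a nonzero element of the finite field $\mathbb{F}_q[M]$ and is therefore invertible as a $k_2\times k_2$ matrix. Its full complement of $k_2$ rows is then linearly independent, so any $r$ of them---and in particular the bottom $r$---remain linearly independent, yielding $\mathrm{Rank}((M^{i-2})_{(r)})=r$ and hence $\mathrm{Rank}(B[i])=r$.

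I do not anticipate any real obstacle. The only bookkeeping point is that the range $3 \leqslant i \leqslant q^{k_2}+1$ translates to $1 \leqslant i-2 \leqslant q^{k_2}-1$, which is precisely the range for which $M^{i-2}$ is a nonzero field element according to Remark \ref{R4.3}; in particular the edge index $i-2=q^{k_2}-1$ gives $M^{q^{k_2}-1}=I_{k_2}$, whose last $r$ rows are manifestly independent, so no separate endpoint argument is needed.
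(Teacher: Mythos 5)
Your proposal is correct and follows essentially the same route the paper takes; the paper simply asserts in one line that the result ``follows directly'' from the definition of the companion matrix $M$ and the hypothesis $\mathrm{Rank}(B)=r$, while you have spelled out the three cases (identity block, the hypothesized $B$, and invertibility of $M^{i-2}$ as a nonzero element of the finite field $\mathbb{F}_q[M]$), which is exactly what that one-line justification compresses.
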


\begin{proof}
According to the definition of companion matrix $M$ and $\text{Rank}(B) = r$, the result follows directly.	
\end{proof}

\begin{remark}\label{R4.6}
Here we present two remarks regarding the set $\mathbf{B}$.
\begin{itemize}
\item [(1)] If $r = 0$, the second layer $\mathbf{B}$ does not exist and the sandwich structure reduces to a double-decker configuration.
\item [(2)] The submatrix $B$ in $B[2]$ varies with the integer $r$, as will be demonstrated subsequently.
\end{itemize}
\end{remark}

We are now in a position to combine the sets $\mathbf{A}$ and $\mathbf{B}$ into a sandwich-like structure, thereby obtaining a class of constructions of full flag codes that include ODFC, QODFC and flag codes with other possible distance values.

\begin{definition}\label{D4.7}
With the setups of Proposition \ref{P4.4} and \ref{P4.5}, define the matrices
\begin{align}
	S[i]:=\begin{pmatrix}
		A[i]\\
		B[i]\\
		A[i+1]
	\end{pmatrix}_{n\times n},\quad  i=1,\dots,q^{k_2},\quad and\quad
	S[q^{k_2}+1]:=\begin{pmatrix}
		A[q^{k_2}+1]\\
		B[q^{k_2}+1]\\
		A[1]
\end{pmatrix}_{n\times n}.
\end{align}
These matrices form a set denoted by
\begin{align}
	\mathbf{S}:=\{S[i]\mid i=1,\dots,q^{k_2}+1\}.
\end{align}
Define the subspaces $\mathcal{F}_j^i := \mathrm{rowsp}(S[i]^{(j)})$ for $i = 1, 2, \dots, q^{k_2}+1$ and $j = 1, 2, \dots, n-1$. Collectively, define the set
\begin{align}
	\mathcal{C}:=\{\mathcal{F}^i := \{\mathcal{F}_1^i, \mathcal{F}_2^i, \dots, \mathcal{F}_{n-1}^i\}\mid i= 1,2,\dots,q^{k_2}+1\ \}.
\end{align}

\end{definition}

\begin{remark}\label{R4.8}
In the sequel, we aim to demonstrate that the rank of every matrix $S[i]$ within the set $\mathbf{S}$ is equal to $n$, and that the set $\mathcal{C}$ constitutes a full flag code.
\end{remark}

We are now prepared to construct a class of full flag codes. As noted in Remark \ref{R4.6}, the submatrix $B$ within $B[2]$ varies based on the value of  the integer $r$. Consequently, it is necessary to classify the cases of $r$.
\\

\noindent \textbf{Case 1: $r=0$}.

We proceed to introduce the first construction of full flag code.

If $r=0$, the set $\mathbf{S}$  presented in Definition \ref{D4.7} was originally introduced in \cite{ANS1}. Furthermore, the full flag code constructed based on the set $\mathbf{S}$ was proven to be an ODFC in the same reference. Herein, we demonstrate that this corresponds precisely to the case where $r = 0$ in Definition \ref{D4.7}.

\begin{lemma}\rm\cite{ANS1}\label{L4.9}
Let $r = 0$ and $n = k_1 + k_2 = 2k_1$. Then all matrices $S[i]$ in the set $\mathbf{S}$ become
\begin{align}
	S[i]:=\begin{pmatrix}
		A[i]\\
		A[i+1]
	\end{pmatrix}_{n\times n},\quad i=1,\dots,q^{k_1},\quad and \quad
	S[q^{k_1}+1]:=\begin{pmatrix}
		A[q^{k_1}+1]\\
		A[1]
	\end{pmatrix}_{n\times n},
\end{align}
and satisfy $\mathrm{Rank}(S[i])=n$.
\end{lemma}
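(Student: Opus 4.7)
The statement packages two claims: first, that under $r=0$ the three-layer matrix $S[i]$ of Definition \ref{D4.7} reduces to the two-block stacking displayed, and second, that every such $S[i]$ has rank $n=2k_1$. The plan handles them in turn.

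For the structural reduction, I would simply read off the definitions. The middle block $B[i]$ is prescribed in Proposition \ref{P4.5} as an $r \times n$ matrix, so setting $r = 0$ makes $B[i]$ an empty (zero-row) block; the stack consisting of $A[i]$, then $B[i]$, then $A[i+1]$ therefore collapses to the stack consisting of $A[i]$ followed by $A[i+1]$. Since $k_2 = k_1 + r = k_1$ when $r=0$, the index $q^{k_2}+1$ appearing in the wrap-around case of Definition \ref{D4.7} becomes $q^{k_1}+1$, matching the statement. This step is pure bookkeeping.

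For the rank claim, I would invoke Proposition \ref{P4.4}. Part (1) gives $\dim \mathcal{A}[i] = k_1$ for every $i$, and part (2) gives $\mathcal{A}[i] \cap \mathcal{A}[j] = \{0\}$ for all distinct $i, j \in \{1, \dots, q^{k_1}+1\}$. Applying this to the consecutive pair $(i, i+1)$, or to the wrap-around pair $(q^{k_1}+1, 1)$, the Grassmann dimension formula yields $\dim(\mathcal{A}[i] + \mathcal{A}[i+1]) = k_1 + k_1 - 0 = 2k_1 = n$. Since the row space of $S[i]$ is precisely $\mathcal{A}[i] + \mathcal{A}[i+1]$, this forces $\mathrm{Rank}(S[i]) = n$.

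No step presents a real obstacle: the first is a convention about empty blocks, and the second is an immediate corollary of the partial-spread property already established in Proposition \ref{P4.4}. The only point meriting a line of care is making the empty-block convention explicit, so that the wrap-around case $i = q^{k_1}+1$ is subsumed uniformly with the other indices rather than treated as a separate sub-case.
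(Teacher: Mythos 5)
Your proposal is correct. Note first that the paper states Lemma 4.9 as a citation from [ANS1] and does not reproduce a proof of it, so there is no internal argument to compare against; but your derivation is exactly the natural one given the framework the paper itself builds. The structural reduction (empty $B$-block when $r=0$, and $k_2=k_1$ so the index cap becomes $q^{k_1}+1$) is pure bookkeeping, as you say. The rank claim is precisely the kind of thing Proposition 4.4 was set up to deliver: since $\mathrm{rowsp}(S[i]) = \mathcal{A}[i] + \mathcal{A}[i+1]$ (indices mod $q^{k_1}+1$), and Proposition 4.4 gives $\dim\mathcal{A}[i] = k_1$ together with $\mathcal{A}[i]\cap\mathcal{A}[j]=\{0\}$ for all distinct indices (including the wrap-around pair), the Grassmann formula forces $\mathrm{Rank}(S[i]) = 2k_1 = n$. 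This is also tidier than the route the paper takes for the analogous rank claims in Lemmas 4.11 and 4.13, where the authors row-reduce the stacked block matrices explicitly rather than quoting the partial-spread property; your spread-theoretic shortcut is both shorter and more conceptual, and it is what the $r=0$ specialization uniquely permits because there is no middle $B$-block to disturb the clean direct-sum picture.
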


\begin{proposition}\rm\cite [Theorem 4.5]{ANS1}
	Based on the setup provided in Lemma \ref{L4.9} and Definition \ref{D4.7}, the set $\mathcal{C}$ constitutes an optimal distance flag code with distance $d_f(\mathcal{C})=\frac{n^2}{2}=2k_1^2$ and achieves the best possible cardinality $|\mathcal{C}| = |\mathbf{S}| = q^{k_1} + 1$.
\end{proposition}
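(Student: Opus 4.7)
The plan is to verify the optimum-distance property through the projected-code characterization in Lemma \ref{L2.4}. Since $n = 2k_1$ is even, the middle dimension $k_1 = n/2$ appears in the type vector, so $t_L = t_R = k_1$, and it is enough to prove that the projected code $\mathcal{C}_{k_1}$ attains the maximum subspace distance $\min\{2k_1, 2(n-k_1)\} = 2k_1$ with $|\mathcal{C}_{k_1}| = |\mathcal{C}|$.

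First I would identify $\mathcal{C}_{k_1}$. By Definition \ref{D4.7} the first $k_1$ rows of each $S[i]$ are exactly $A[i]$, hence $\mathcal{F}^i_{k_1} = \mathrm{rowsp}(A[i]) = \mathcal{A}[i]$, so $\mathcal{C}_{k_1} = \mathfrak{A}$. Proposition \ref{P4.4} (specialized to $r = 0$, $k_2 = k_1$) already shows that $\mathfrak{A}$ is a partial $k_1$-spread of $\mathbb{F}_q^n$ of size $q^{k_1}+1$; the pairwise trivial intersection of the $\mathcal{A}[i]$ is exactly the statement that $d_S(\mathcal{A}[i], \mathcal{A}[j]) = 2k_1$, which is the maximum value in $\mathcal{G}_q(k_1, n)$.

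Next I would check that $|\mathcal{C}| = q^{k_1}+1$. Lemma \ref{L4.9} gives $\mathrm{Rank}(S[i]) = n$, so each $\mathcal{F}^i_j$ has dimension exactly $j$ and the chain of inclusions is strict; the $\mathcal{F}^i$ are therefore bona fide full flags. Distinctness of the $\mathcal{A}[i]$ immediately forces distinctness of the $\mathcal{F}^i$ (their $k_1$-th components already differ), while the construction furnishes at most $q^{k_1}+1$ flags, whence $|\mathcal{C}| = |\mathcal{C}_{k_1}| = q^{k_1}+1$. Applying Lemma \ref{L2.4} now yields that $\mathcal{C}$ is an ODFC, and Lemma \ref{L2.5} gives $d_f(\mathcal{C}) = D^{(n)} = n^2/2 = 2k_1^2$.

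Finally, that this cardinality is the best possible follows from Theorem \ref{T3.5}: with $r = 0$ (and $k_1 \geqslant 2$) one has $|\mathcal{C}| \leqslant A_q(n, 2k_1; k_1)$, and since $k_1 > (q^0 - 1)/(q-1) = 0$, Lemma \ref{L2.2} evaluates this bound as $q^{k_1}+1$, matching our count exactly. No step is genuinely difficult, because Proposition \ref{P4.4} and Lemma \ref{L4.9} already encode all the required geometric and linear-algebraic content; the only point deserving mild care is the lift from distinctness of the partial-spread elements $\mathcal{A}[i]$ to distinctness of the full flags $\mathcal{F}^i$, which comes for free once one observes that $\mathcal{A}[i]$ sits as the $k_1$-th component of $\mathcal{F}^i$.
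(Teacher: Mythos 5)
Your proof is correct and follows the natural route that the paper itself takes for the analogous cases $r\geqslant 1$ (Theorems \ref{T4.12} and \ref{T4.15}), and that \cite{ANS1} uses for the cited result: identify the middle projected code $\mathcal{C}_{k_1}$ with the partial spread $\mathfrak{A}$ from Proposition \ref{P4.4}, note that pairwise trivial intersection is exactly the statement $d_S=2k_1$, lift distinctness of the spread elements to distinctness of the flags, and invoke Lemma \ref{L2.4}. You also correctly exploit the simplification that, for $n=2k_1$ even, $L=R=k_1$ so only one projected code needs checking (unlike the $r\geqslant 1$ cases, which require separate verification of $\mathcal{C}_L$ and $\mathcal{C}_R$). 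The one small point worth flagging is the appeal to Theorem \ref{T3.5} for optimality of the cardinality, which in the paper is stated under the hypothesis $k\geqslant 2$; you could avoid this restriction by citing Lemmas \ref{L2.1} and \ref{L2.2} directly (with $r=0$, $A_q(2k_1,2k_1;k_1)=q^{k_1}+1$ holds for every $k_1\geqslant 1$).
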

\hspace*{\fill}

\noindent \textbf{Case 2: $r=1$}.

We now turn to the second construction . Similar to Case 1, we shall first demonstrate that each matrix $S[i]$ is of full rank.

\begin{lemma}\label{L4.11}
Let $k_2 = k_1 + 1$ and $n = k_1 + k_2 = 2k_1 + 1$. Define the submatrix $B := \begin{pmatrix} 1 & 0 & 0 & \cdots & 0 \end{pmatrix}_{1 \times n}$ within $B[2]$ in Proposition \ref{P4.5}. Then the rank of each matrix $S[i]$ in the set $\mathbf{S}$ is equal to $n$.
\end{lemma}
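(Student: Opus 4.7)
The plan is to show that, for each index $i$, the $n\times n$ matrix $S[i]$ can be brought by elementary row operations into a matrix of rank $n$. The key observation is that for $2\le i\le q^{k_2}$ both $A[i]$ and $A[i+1]$ carry $I_{k_1}$ in their leftmost $k_1$ columns, so subtracting the $A[i]$-block from the $A[i+1]$-block zeros out those columns; analogous cancellations apply at $i=1$ and $i=q^{k_2}+1$. After this reduction, $\mathrm{Rank}(S[i])$ equals $k_1$ plus the rank of a certain $(k_1+1)\times k_2$ block $T_i$ sitting in the last $k_2$ columns, so it suffices to prove $\mathrm{Rank}(T_i)=k_2$.

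First I would dispose of the three boundary indices $i\in\{1,2,q^{k_2}+1\}$ by direct computation using Remark \ref{R4.3}. At $i=1$, the matrix $S[1]$ is already a row permutation of $I_n$. At $i=q^{k_2}+1$, the identity $M^{q^{k_2}-1}=I_{k_2}$ gives $A[q^{k_2}+1]=[\,I_{k_1}\mid I_{k_1}\mid O_{k_1\times 1}\,]$, and a single elimination step reduces $S[q^{k_2}+1]$ to a permutation of $I_n$. At $i=2$, one has $M^{0}=O_{k_2}$, so the lower part of $T_2$ is the first $k_1$ rows of $M$, namely the unit vectors $e_2,\dots,e_{k_2}$; combined with the custom row $B=(1,0,\dots,0)$ contributed by $B[2]$, this assembles into a permutation of $I_{k_2}$, which is exactly where the specific choice of $B$ in the hypothesis is used.

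The generic indices $i\in\{3,\dots,q^{k_2}\}$ form the core of the argument and require the field structure of $\mathbb{F}_q[M]$. After reduction one obtains
\[
T_i=\begin{pmatrix}(M^{i-2})_{(1)}\\ (M^{i-2})^{(k_1)}\,(M-I_{k_2})\end{pmatrix}.
\]
Letting $u$ denote the first row of $M^{i-2}$ and applying Lemma \ref{L4.2}(2) to the element $M^{i-2}\in\mathbb{F}_q[M]$, the rows of $T_i$ rewrite as $w,wM,wM^{2},\dots,wM^{k_2-2},uM^{k_2-1}$, where $w:=u(M-I_{k_2})$. The telescoping identity $wM^{j}=uM^{j+1}-uM^{j}$ shows every row of $T_i$ lies in $\mathrm{rowsp}(M^{i-2})$; conversely, starting from the last row $uM^{k_2-1}$ one can successively recover $uM^{k_2-2},\dots,uM,u$ by back-substitution, so $\mathrm{rowsp}(T_i)=\mathrm{rowsp}(M^{i-2})$. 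Since $i-2\in\{1,\dots,q^{k_2}-2\}$, Lemma \ref{L4.1} ensures $M^{i-2}$ is a nonzero element of the field $\mathbb{F}_q[M]$, hence invertible, and $\mathrm{Rank}(T_i)=k_2$ follows.

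The main obstacle is this generic case: one must combine the single row from $B[i]$ (a row of $M^{i-2}$) with the $k_1$ rows from $A[i+1]-A[i]$ (rows of $M^{i-1}-M^{i-2}$) and recognize that the resulting $k_2\times k_2$ matrix is row-equivalent to the invertible matrix $M^{i-2}$ via the telescoping relationship above. The boundary case $i=2$ is the conceptual reason behind the otherwise mysterious choice $B=(1,0,\dots,0)$: the generic argument breaks down there because $M^{0}=O_{k_2}$, and the custom row is calibrated precisely to restore a standard basis vector missing from $M^{(k_1)}$.
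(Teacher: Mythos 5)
Your proof is correct and reaches the conclusion by essentially the same case split as the paper (special indices $i=1,2$, then a generic argument reducing $\mathrm{Rank}(S[i])$ to the rank of a $k_2\times k_2$ block $T_i$), but the technique for the generic case is genuinely different and you are more careful about the index $i=q^{k_2}+1$.

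For the generic indices, the paper factors $T_i=C\cdot M^{i-2}$ with
$C=\begin{pmatrix}I_{(1)}\\(M-I)^{(k_1)}\end{pmatrix}$
written out explicitly as a bidiagonal matrix with a lone $1$ in the top-right corner, then argues $\mathrm{Rank}(C)=k_2$ and $\mathrm{Rank}(M^{i-2})=k_2$ by inspection. You instead apply Lemma \ref{L4.2}(2) to write the rows of $T_i$ as $w,wM,\dots,wM^{k_2-2},uM^{k_2-1}$ with $w=u(M-I)$, and use the telescoping identity $wM^{j}=uM^{j+1}-uM^{j}$ to show $\mathrm{rowsp}(T_i)=\mathrm{rowsp}(M^{i-2})$ by forward substitution one way and back-substitution starting from $uM^{k_2-1}$ the other way. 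The two arguments establish the same fact; yours is the observation that $C$ is invertible via the row-span identity rather than via its explicit bidiagonal shape. The paper's version is more computational, yours is more conceptual and shorter once the Lemma \ref{L4.2}(2) notation is in place.

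One place where your proof is actually tighter than the paper's: Definition \ref{D4.7} sets $S[q^{k_2}+1]=\begin{pmatrix}A[q^{k_2}+1]\\B[q^{k_2}+1]\\A[1]\end{pmatrix}$, whose last block $A[1]=[\,O_{k_1}\mid I_{k_1}\mid O_{k_1\times 1}\,]$ does not carry $I_{k_1}$ on the left, so the paper's generic formula for $i\geqslant 3$ (which assumes the last block is $A[i+1]=[\,I_{k_1}\mid(M^{i-1})^{(k_1)}\,]$) does not literally cover $i=q^{k_2}+1$ even though the paper claims it does. You treat $i=q^{k_2}+1$ as a separate boundary case and reduce $S[q^{k_2}+1]$ to a permutation of $I_n$ directly, which closes this gap cleanly.
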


\begin{proof}
Let us first compute $S[1]$ and $S[2]$. By Definition \ref{D4.7}, both
\begin{align*}
	 	S[1]=\begin{pmatrix}
	 		O_{k_1}          & I_{k_1} & O_{k_1\times 1}\\
	 		O_{1\times k_1}  &  O_{1\times k_1}  & 1    \\
	 		I_{k_1}          &  O_{k_1}     & O_{k_1\times 1}
	 	\end{pmatrix}_{n\times n}
 \end{align*}
and
\begin{align*}
S[2]=\left(\begin{array}{c|c}
	 		I_{k_1}         &  \begin{array}{cc} O_{k_1}   & O_{k_1\times 1} \end{array} \\
	\hline  O_{1\times k_1} &  \begin{array}{cc} 1\ \ \  & O_{1\times k_1} \end{array} \\
	\hline  I_{k_1}         &  M^{(k_1)}
	  \end{array}\right)_{n\times n}
	 \end{align*}
are full-rank matrices.
	
	 Now consider $i\geqslant 3$. By performing calculations, we have
\begin{align*}
S[i]=\begin{pmatrix}
      I_{k_1}         & (M^{i-2})^{(k_1)}\\
      O_{1\times k_1} & (M^{i-2})_{(1)}      \\
      I_{k_1}         & (M^{i-1})^{(k_1)}
\end{pmatrix}_{n\times n}
\rightarrow \begin{pmatrix}
	         I_{k_1}         & (M^{i-2})^{(k_1)}\\
	         O_{1\times k_1} & (M^{i-2})_{(1)}      \\
	         O_{k_1}         & ((M-I)M^{i-2})^{(k_1)}
           \end{pmatrix}_{n\times n} .
 \end{align*}
Furthermore, given that
\begin{align*}
\begin{pmatrix}
 (M^{i-2})_{(1)}      \\
 ((M-I)M^{i-2})^{(k_1)}
\end{pmatrix}
=\begin{pmatrix}
  I_{(1)}      \\
  (M-I)^{(k_1)}
 \end{pmatrix}\cdot M^{i-2}
=\begin{pmatrix}
 0 &  0 &      &\cdots&      & 1  \\
-1 &  1 &   0  &      &      &    \\
   & -1 &   1  &   0  &      &    \\
   &    &\ddots&\ddots&\ddots&    \\
   &    &      &  -1  & 1    & 0  \\
   &    &      &      & -1   & 1	 		
 \end{pmatrix}_{k_2\times k_2} \cdot M^{i-2},
\end{align*}
and $\text{Rank}(M)=k_2=k_1+1$, it is evident that $\textrm{Rank}(S[i])=n$ for any $i\in\{3, \dots,q^{k_2}+1\}$.
\end{proof}

Next, we present the specific construction of the full flag code when $r=1$.

\begin{theorem}\label{T4.12}
With the framework established by Lemma \ref{L4.11} and Definition \ref{D4.7}, the set $\mathcal{C}$ constitutes an optimal distance flag code, achieving the maximum distance $d_f(\mathcal{C})=2k_1k_2=\frac{n^2-1}{2}$ and the maximum possible cardinality $|\mathcal{C}| = |\mathbf{S}| = q^{k_2} + 1$.	
\end{theorem}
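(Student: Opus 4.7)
The plan is to apply Lemma~\ref{L2.4}, which characterizes ODFCs via their two central projected codes. Since $n=2k_1+1$ is odd, formula (\ref{F2.9}) gives $D^{(n)}=(n^2-1)/2=2k_1k_2$, and the distinguished indices defined before Lemma~\ref{L2.4} specialize to $L=k_1$ and $R=k_2$. Thus it suffices to establish: (a) $|\mathcal{C}|=q^{k_2}+1$; (b) the $L$-projected code $\mathcal{C}_{k_1}$ attains the maximum distance $2k_1$ and has cardinality $|\mathcal{C}|$; (c) the $R$-projected code $\mathcal{C}_{k_2}$ attains the maximum distance $2(n-k_2)=2k_1$ and has cardinality $|\mathcal{C}|$.

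Part (b) is immediate from the construction. Since $\mathcal{F}_{k_1}^i=\mathrm{rowsp}(S[i]^{(k_1)})=\mathrm{rowsp}(A[i])=\mathcal{A}[i]$, one has $\mathcal{C}_{k_1}=\mathfrak{A}$, and Proposition~\ref{P4.4} shows that $\mathfrak{A}$ is a partial $k_1$-spread of $\mathbb{F}_q^n$ with $q^{k_2}+1$ pairwise distinct subspaces, whence Lemma~\ref{L2.1} gives $d_S(\mathcal{C}_{k_1})=2k_1$. Because the $\mathcal{A}[i]$ are distinct, so are the full flags $\mathcal{F}^i$, which combined with $|\mathcal{C}|\leqslant|\mathbf{S}|=q^{k_2}+1$ establishes part (a) and forces $|\mathcal{C}_{k_1}|=|\mathcal{C}|$.

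For part (c), I will prove that $\mathcal{F}_{k_2}^i+\mathcal{F}_{k_2}^j=\mathbb{F}_q^n$ for every pair $i\neq j$ in $\{1,\dots,q^{k_2}+1\}$. This single condition automatically forces $\dim(\mathcal{F}_{k_2}^i\cap\mathcal{F}_{k_2}^j)=2k_2-n=1$, so the codewords are all distinct and their subspace distance equals $2(k_2-1)=2k_1$, the maximum possible. Equivalently, the $2k_2\times n$ matrix obtained by stacking $A[i], B[i], A[j], B[j]$ vertically must have rank $n=k_1+k_2$. I would argue this via case analysis on whether $\{i,j\}\cap\{1,2\}=\emptyset$ or not, exploiting in every case the leading $k_1\times k_1$ identity blocks of $A[i]$ and $A[j]$ to reduce to a rank question on the last $k_2$ columns.

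The generic case $i,j\geqslant 3$ is the heart of the argument. After subtracting the $A[i]$-block row from the $A[j]$-block row, the identity blocks cancel and the lower block is left with the rows $(M^{i-2})_{(1)}$, $(M^{j-2}-M^{i-2})^{(k_1)}$, and $(M^{j-2})_{(1)}$. Since $(M^{j-2})_{(1)}-(M^{i-2})_{(1)}=(M^{j-2}-M^{i-2})_{(1)}$, these $k_1+2$ rows jointly span one extra row together with the full matrix $M^{j-2}-M^{i-2}$; by Lemma~\ref{L4.1} this difference is a nonzero element of the field $\mathbb{F}_q[M]$, hence invertible, so its rowspace is all of $\mathbb{F}_q^{k_2}$ and the lower block attains rank $k_2$. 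The boundary cases with $i\in\{1,2\}$ are handled by direct inspection of the explicit forms of $A[1], B[1], A[2], B[2]$, using that $M^{j-2}$ is invertible for every $j\geqslant 3$ (with $M^{q^{k_2}-1}=I_{k_2}$ from Remark~\ref{R4.3} covering the largest index). I expect the bookkeeping in this case analysis, particularly the interplay between the marker row $B=(1,0,\dots,0)$ inside $B[2]$ and the companion-matrix powers, to be the main obstacle.
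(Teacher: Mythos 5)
Your proposal is correct and follows essentially the same route as the paper: both reduce Theorem~\ref{T4.12} to Lemma~\ref{L2.4}, identify $\mathcal{C}_L=\mathcal{C}_{k_1}$ with the partial spread $\mathfrak{A}$ of Proposition~\ref{P4.4}, and establish that $\mathcal{C}_R=\mathcal{C}_{k_2}$ has maximum distance by showing the $2k_2\times n$ stacked matrices $\bigl(\begin{smallmatrix}S[i]^{(k_2)}\\ S[j]^{(k_2)}\end{smallmatrix}\bigr)$ have full rank $n$ via row reduction and the invertibility of $M^{j-2}-M^{i-2}$ in $\mathbb{F}_q[M]$. Your case split ($\{i,j\}\cap\{1,2\}$ empty or not) and the supporting rank computations match the paper's, and the boundary cases you defer to direct inspection do indeed go through as you anticipate.
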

\begin{proof}
Since $n=k_1+k_2=2k_1+1$, we see that $t_L=k_1$ and $t_R=k_1+1=k_2$ according to Definition \ref{D2.7}. Consequently, the $L-th$ and $R-th$ projected codes of $\mathcal{C}$ are
\begin{align*}
	\mathcal{C}_L=\mathcal{C}_{k_1}=\{\mathcal{F}^i_{k_1}\mid\ i=1,\dots,q^{k_2}+1\}
\end{align*}
and
\begin{align*}
	\mathcal{C}_R=\mathcal{C}_{k_2}=\{\mathcal{F}^i_{k_2}\mid\ i=1,\dots,q^{k_2}+1\}.
\end{align*}

To prove that $\mathcal{C}$ is an ODFC, we only need to show that both $\mathcal{C}_L$ and $\mathcal{C}_R$ have maximum distance $2k_1$ and cardinality $|\mathcal{C}_L|=|\mathcal{C}_R|=|\mathcal{C}|$ by Lemma \ref{L2.4}.
Since every $\mathcal{F}^i_{k_1}=\mathrm{rowsp}(S[i]^{(k_1)})=\mathcal{A}[i]$, and the set $\mathfrak{A}= \{\mathcal{A}[i]\mid i=1,\dots,q^{k_2}+1\}$ constitutes a partial $k_1$-spread, it follows that $\mathcal{C}_L$ is a constant dimension code with the maximum distance $2k_1$, possessing a cardinality of $|\mathcal{C}_L| =|\mathcal{C}|=q^{k_2}+1$.

We next demonstrate that the distance of $\mathcal{C}_R = \mathcal{C}_{k_2}$ is $2k_1$. It suffices to show that $\dim(\mathcal{F}^i_{k_2}+\mathcal{F}^j_{k_2})=n$ for any $\mathcal{F}^i_{k_2} \neq \mathcal{F}^j_{k_2}$. This conclusion can be established by calculating the rank of the generated matrix of $\mathcal{F}^i_{k_2}+\mathcal{F}^j_{k_2}$.

Based on Definition \ref{D4.7}, $\mathcal{F}^i_{k_2} = \mathrm{rowsp}(S[i]^{(k_2)})$ for all $i = 1, \dots, q^{k_2 }+ 1$. It can be concluded that
\begin{align*}
\mathrm{Rank}\begin{pmatrix}
		S[1]^{(k_2)}\\
		S[2]^{(k_2)}
	    \end{pmatrix}
&=\mathrm{Rank}\left(\begin{array}{c|cc}
		O_{k_1}         & I_{k_1}         & O_{k_1\times 1} \\
		O_{1\times k_1} & O_{1\times k_1} & 1               \\
 \hline I_{k_1}         & O_{k_1}         & O_{k_1\times 1} \\
		O_{1\times k_1} &  1              & O_{1\times k_1}
        \end{array}\right)_{2k_2\times n} \\
&=\mathrm{Rank}\left( \begin{array}{c|cc}
		O_{k_1}         & I_{k_1}         & O_{k_1\times 1} \\
		O_{1\times k_1} & O_{1\times k_1} & 1               \\
 \hline I_{k_1}         & O_{k_1}         & O_{k_1\times 1} \\
	    \end{array}\right)_{n\times n}=n,\\
\textrm{and}\quad \mathrm{Rank}\begin{pmatrix}
		S[1]^{(k_2)}\\
		S[i]^{(k_2)}
	    \end{pmatrix}
&=\mathrm{Rank}\left( \begin{array}{c|c}
	   O_{k_1}         & \begin{array}{cc} I_{k_1} & O_{k_1\times 1}\end{array}\\
	   O_{1\times k_1} & \begin{array}{cc} O_{1\times k_1}\  & 1 \ \ \end{array} \\
\hline \begin{array}{c} I_{k_1}\\O_{1\times k_1} \end{array} & M^{i-2}
       \end{array}\right)_{2k_2\times n}\\	
&=\mathrm{Rank}\left( \begin{array}{c|c}
		O_{k_1}         & \begin{array}{cc} I_{k_1} & O_{k_1\times 1}\end{array}\\
		O_{1\times k_1} & \begin{array}{cc} O_{1\times k_1} & 1 \  \end{array} \\
 \hline I_{k_1}         &   (M^{i-2})^{(k_1)}	
	    \end{array}\right)_{n\times n}=n \quad \mathrm{for}\ i\geqslant 3.
\end{align*}
Furthermore, for $i,j\in\{2,\dots,q^{k_2}+1\}$ and $i\neq j$, it yields that
\begin{align*}
	\left( \begin{array}{c}
		S[i]^{(k_2)}\\
		S[j]^{(k_2)}
	\end{array}\right) =
	\left( \begin{array}{c|c}
		\begin{array}{c} I_{k_1}\\O_{1\times k_1} \end{array} & M^{i-2}\\
		\hline\begin{array}{c} I_{k_1}\\O_{1\times k_1} \end{array} & M^{j-2}
	\end{array}\right)_{2k_2\times n} \rightarrow
	\left(\begin{array}{c|c}
		\begin{array}{c} I_{k_1}\\O_{1\times k_1} \end{array} & M^{i-2}\\
		\hline O_{k_2\times k_1}                                   & M^{j-2}-M^{i-2}
	\end{array} \right)_{2k_2\times n}   ,
\end{align*}
and
\begin{align*}
\mathrm{Rank}
	\left(\begin{array}{c|c}
		\begin{array}{c} I_{k_1}\\O_{1\times k_1} \end{array} & M^{i-2}\\
		\hline O_{k_2\times k_1}                                   & M^{j-2}-M^{i-2}
	\end{array} \right)_{2k_2\times n}  =
\mathrm{Rank}
	\left(\begin{array}{c|c}
		\begin{array}{c} I_{k_1} \end{array} & (M^{i-2})^{(k_1)}\\
		\hline O_{k_2\times k_1}                                   & M^{j-2}-M^{i-2}
	\end{array} \right)_{n\times n}=n.
\end{align*}
Consequently, as derived from the distance (\ref{F2.1}), it is evident that the $R$-th projected code $\mathcal{C}_R$ constitutes a constant dimension code with maximum distance $d_S(\mathcal{C}_R)=2(n-k_2) = 2k_1$, and cardinality $|\mathcal{C}_R| = |\mathcal{C}| = q^{k_2} + 1$.

Therefore, $\mathcal{C}$ constitutes an optimal distance flag code with distance $d_f(\mathcal{C})=2k_1k_2=\frac{n^2-r^2}{2}$ and achieves the maximum possible cardinality $|\mathcal{C}| = |\mathbf{S}| = q^{k_2} + 1$.
\end{proof}

\begin{remark}
 In \cite{K}  S. Kurz provides a construction of full flag code which has the same cardinality and minimum distance as the one showed in Theorem \ref{T4.12}, but no decoding algorithm is proposed. One contribution of this paper is introducing a well-structured construction for this code in terms of block matrices. Thanks to this
description, in Section 5 we are able to provide an efficient decoding algorithm for such code.
\end{remark}

\hspace*{\fill}

\noindent \textbf{Case 3: $r\geqslant 2$}.

We proceed to show the third construction for the full flag code. In particular, this full flag code is precisely the QODFC for $r = 2$. 

First, we present the submatrix $B$ in $B[2]$ from the set $\mathbf{B}$.

\begin{lemma}\label{L4.13}
With the setup of Definition \ref{D4.7}, let $k_2=k_1+r$, $n=k_1+k_2=2k_1+r$, and the submatrix 
\begin{align*}
B:=\begin{pmatrix}
	\begin{array}{cccccc} 1 & 0 & 0 & \cdots & 0 & 0\end{array}\\
	\begin{array}{cc} O_{(r-1)\times(k_1+1)} & I_{(r-1)}\end{array}
\end{pmatrix}_{r\times k_2}
\end{align*}
in $B[2]$.
Then the rank of every matrix $S[i]$ within the set $\mathbf{S}$ is equal to $n$.	
\end{lemma}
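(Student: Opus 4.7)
The plan is to mimic the proof of Lemma~\ref{L4.11}, treating $i=1$, $i=2$, $3\le i\le q^{k_2}$, and $i=q^{k_2}+1$ as four separate sub-cases, and in each case applying elementary block-row operations to reduce $S[i]$ to a matrix whose rank is visibly~$n$.

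The two boundary cases $i=1$ and $i=q^{k_2}+1$ are essentially book-keeping. Using the conventions $M^0=O_{k_2}$ (Definition~\ref{D4.2}) and $M^{q^{k_2}-1}=I_{k_2}$ (Remark~\ref{R4.3}), the three horizontal blocks of $S[i]$ contain an $I_{k_1}$, an $I_r$, and an $I_{k_1}$ occupying three disjoint column groups; for $i=q^{k_2}+1$ one block-row subtraction is needed to decouple the two copies of $I_{k_1}$ that sit inside the first horizontal block $A[q^{k_2}+1]=[I_{k_1}\mid I_{k_1}\mid O_{k_1\times r}]$. The case $i=2$ is the first substantive step: since $M^0=O_{k_2}$, we have $A[2]=[I_{k_1}\mid O_{k_1\times k_2}]$, and subtracting $A[2]$ from $A[3]=[I_{k_1}\mid M^{(k_1)}]$ brings $S[2]$ into block upper-triangular form with $I_{k_1}$ in the top-left and the $k_2\times k_2$ block $\bigl(\begin{smallmatrix} B\\ M^{(k_1)}\end{smallmatrix}\bigr)$ in the bottom-right. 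The choice of $B$ is engineered exactly so that this block has full rank: the superdiagonal $1$'s of $M^{(k_1)}$ place pivots in columns $2,\dots,k_1+1$, the first row of $B$ supplies a pivot in column~$1$, and the remaining $r-1$ rows of $B$ supply pivots in columns $k_1+2,\dots,k_2$, together covering all $k_2$ columns.

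For $3\le i\le q^{k_2}$, subtracting the first block of $S[i]$ from the third turns the third block into $[O_{k_1}\mid ((M-I)M^{i-2})^{(k_1)}]$, so the problem reduces to showing
\begin{equation*}
\mathrm{Rank}\begin{pmatrix} (M^{i-2})_{(r)}\\ ((M-I)M^{i-2})^{(k_1)}\end{pmatrix}
= \mathrm{Rank}\left(\begin{pmatrix} I_{(r)}\\ (M-I)^{(k_1)}\end{pmatrix}\cdot M^{i-2}\right)=k_2.
\end{equation*}
Since $M^{i-2}\in\mathbb{F}_q[M]$ is invertible in this range, it suffices to verify that the fixed, $i$-independent matrix $N:=\bigl(\begin{smallmatrix} I_{(r)}\\ (M-I)^{(k_1)}\end{smallmatrix}\bigr)$ has rank $k_2$. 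Its top $r$ rows are $e_{k_1+1},\dots,e_{k_2}$ and its bottom $k_1$ rows are $-e_j+e_{j+1}$ for $j=1,\dots,k_1$, so back-substitution starting from $e_{k_1+1}$ recovers $e_{k_1},e_{k_1-1},\dots,e_1$ in turn, giving $k_2$ independent rows.

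The only real technical point I anticipate is the pivot bookkeeping that appears for the block $\bigl(\begin{smallmatrix} B\\ M^{(k_1)}\end{smallmatrix}\bigr)$ at $i=2$ and for $N$ in the general case. Both rest on the same observation: the pivots of $M^{(k_1)}$ (respectively $(M-I)^{(k_1)}$) live only in columns $2,\dots,k_1+1$, so column $1$ and columns $k_1+2,\dots,k_2$ must be filled by the additional rows, and the specific form of $B$ in the statement does exactly this.
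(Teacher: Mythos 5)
Your proof is correct and follows essentially the same route as the paper, which simply invokes the argument of Lemma~\ref{L4.11} and spells out only the new computation for $S[2]$. You carry out the same block row reductions, the same factorization $\bigl(\begin{smallmatrix} (M^{i-2})_{(r)} \\ ((M-I)M^{i-2})^{(k_1)} \end{smallmatrix}\bigr) = \bigl(\begin{smallmatrix} I_{(r)} \\ (M-I)^{(k_1)} \end{smallmatrix}\bigr) M^{i-2}$, and the same pivot count for $\bigl(\begin{smallmatrix} B \\ M^{(k_1)} \end{smallmatrix}\bigr)$, merely treating the endpoint $i=q^{k_2}+1$ more explicitly than the paper does.
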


\begin{proof}
The proof can be handled in much the same way as Lemma \ref{L4.11}, the only difference being in the analysis of the matrix $B[2]$. By culculation,
 \begin{align*}
S[2]=\left(\begin{array}{c|c}
		I_{k_1}         &  O_{k_1\times k_2} \\
 \hline O_{r\times k_1} & \begin{array}{c}
			              \begin{array}{cccccc} 1 & 0 & 0 & \cdots & 0 & 0 \end{array}\\
			              \begin{array}{cc} O_{(r-1)\times(k_1+1)} & I_{(r-1)}\end{array}
		                  \end{array}\\
 \hline I_{k_1}         &  M^{(k_1)}
	  \end{array}\right)_{n\times n} 
\rightarrow
      \left(\begin{array}{c|c}
		I_{k_1}         &  O_{k_1\times k_2}           \\
 \hline O_{r\times k_1} & \begin{array}{c}
 	                      \begin{array}{cccccc} 1 & 0 & 0 & \cdots & 0 & 0 \end{array}\\
 	                      \begin{array}{cc} O_{(r-1)\times(k_1+1)} & I_{(r-1)}\end{array}
                          \end{array}\\
 \hline O_{k_1}         & \begin{array}{ccc} O_{k_1\times1} & I_{k_1} & O_{k_1\times (r-1)}\end{array}
	\end{array}\right)_{n\times n}
\end{align*}
is a full-rank matrix. 
\end{proof}

In what follows, we present the cardinality and distance of the flag code $\mathcal{C}$. Before proceeding with these details, we first provide a lemma that will be utilized in the calculation of the distance.

\begin{lemma}\label{L4.14}
With the framework established by Lemma \ref{L4.13} and Definition \ref{D4.7}, the distance of the projected code $\mathcal{C}_l$ is equal to $2k_1$ for all $l = k_1 + 1, k_1 + 2, \ldots, k_2 - 1$.
\end{lemma}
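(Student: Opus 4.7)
My plan is to establish $d_S(\mathcal{C}_l) = 2k_1$ by proving the matching upper and lower bounds. For the upper bound, I will show that the pair $(\mathcal{F}_l^1, \mathcal{F}_l^2)$ achieves $d_S = 2k_1$. By the explicit structures of $A[1], A[2], B[1]^{(s)}, B[2]^{(s)}$ given in Propositions \ref{P4.4}, \ref{P4.5}, and Lemma \ref{L4.13}, both $\mathcal{F}_l^1$ and $\mathcal{F}_l^2$ are coordinate subspaces of $\mathbb{F}_q^n$ whose basis vectors can be listed by inspection; matching supports yields $\dim(\mathcal{F}_l^1 \cap \mathcal{F}_l^2) = s = l - k_1$, hence $d_S(\mathcal{F}_l^1, \mathcal{F}_l^2) = 2(l - s) = 2k_1$ by (\ref{F2.1}).

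For the lower bound I will show $\dim(\mathcal{F}_l^i + \mathcal{F}_l^j) \geq l + k_1$ for every pair $i \neq j$ by computing the rank of the stacked generator matrix $\begin{pmatrix} S[i]^{(l)} \\ S[j]^{(l)} \end{pmatrix}$ via row reduction, split into three cases. In the generic case $i, j \geq 3$, subtracting the identical $I_{k_1}$ blocks of $A[i]$ and $A[j]$ produces a bottom block (in the last $k_2$ columns) that contains $N^{(l)}$, the first $l$ rows of $N := M^{j-2} - M^{i-2}$. By Lemma \ref{L4.1}, $N$ is a nonzero element of the field $\mathbb{F}_q[M]$, hence invertible as a matrix, so $N^{(l)}$ has rank $l$ and the total rank is at least $k_1 + l$. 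When $i = 2$ and $j \geq 3$, the same elimination works (using $X_2 = 0$ per Remark \ref{R4.3}) and the bottom block contains $(M^{j-2})^{(l)}$, again of rank $l$. When $i = 1$ and $j \geq 2$, the subspace $\mathcal{F}_l^1$ is contained in $\{0\} \times \mathbb{F}_q^{k_2}$ and has dimension $l$, while $A[j]$ contributes the full $I_{k_1}$ block in the first $k_1$ columns, so the total rank is again at least $k_1 + l$.

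The main obstacle is the case $i, j \geq 3$: here the lower bound hinges on the invertibility of $M^{j-2} - M^{i-2}$ as a matrix, which is a direct consequence of the field structure of $\mathbb{F}_q[M]$ (Lemma \ref{L4.1}), since distinct powers of $M$ give distinct field elements. The remaining cases then reduce to routine rank bookkeeping using the coordinate-subspace forms of $\mathcal{F}_l^1$ and $\mathcal{F}_l^2$ together with the fact that the first $l \leq k_2 - 1$ rows of any invertible $k_2 \times k_2$ matrix over $\mathbb{F}_q$ are linearly independent.
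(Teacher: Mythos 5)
Your proof is correct, and it takes a genuinely different route from the paper's. The paper argues by induction on $l$: for the base case $l = k_1+1$ it first pins $d_S(\mathcal{C}_{k_1+1})$ to $\{2k_1, 2k_1+2\}$ and then rules out $2k_1+2$ by a \emph{counting} argument (a partial $(k_1+1)$-spread of $\mathbb{F}_q^n$ cannot have $q^{k_2}+1$ elements, by Lemma~\ref{L2.1}); for the inductive step it uses the shift property of companion matrices (Lemma~\ref{L4.2}(2)) to show that the extra row gained when passing from $l$ to $l+1$ stays in the span already accounted for. You instead compute the rank of the stacked generator matrix $\bigl(\begin{smallmatrix}S[i]^{(l)}\\S[j]^{(l)}\end{smallmatrix}\bigr)$ directly for every $l$ and every pair $i\neq j$: after two routine eliminations (subtracting the $I_{k_1}$ blocks, then the $B$-rows) the bottom block becomes $N^{(l)}$ with $N = M^{j-2}-M^{i-2}$, a nonzero element of the field $\mathbb{F}_q[M]$, hence invertible, so the rank is at least $k_1+l$; and the explicit pair $(\mathcal{F}^1_l,\mathcal{F}^2_l)$, being two coordinate subspaces, gives $\dim(\mathcal{F}^1_l\cap\mathcal{F}^2_l)=l-k_1$ and thus exactly $2k_1$. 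Your route is more direct and avoids both the size-of-partial-spread count and the companion-matrix recursion; what you lose is that the paper's inductive format naturally isolates which pairs satisfy $d_S(\mathcal{F}^i_l,\mathcal{F}^j_l)=2k_1$, information it reuses informally in Step~3 of the decoding discussion. One small wording caveat: saying that ``subtracting the identical $I_{k_1}$ blocks'' already produces $N^{(l)}$ is slightly imprecise --- that step yields $N^{(k_1)}$, and you also need to subtract the middle $B[i]$-rows from the $B[j]$-rows to obtain the remaining rows of $N^{(l)}$ --- but this does not affect the validity of the argument.
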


\begin{proof}
The proof is by induction on the integer $l$.

If $l=k_1+1$, it yields that $\mathcal{F}^i_{k_1}\subset\mathcal{F}^i_{k_1+1}$ and that  $\dim(\mathcal{F}^i_{k_1+1})=k_1+1$ for all $i=1,\dots,q^{k_2}+1$. For any two distinct $(k_1+1)$-dimensional subspaces $\mathcal{F}^i_{k_1+1},\mathcal{F}^j_{k_1+1}\in \mathcal{C}_{k_1+1}$, we have
\begin{align*}
	 d_S(\mathcal{F}^i_{k_1+1},\mathcal{F}^j_{k_1+1})=2[(k_1+1)-\dim(\mathcal{F}^i_{k_1+1}\cap\mathcal{F}^j_{k_1+1})].
\end{align*}
 Since $\mathcal{C}_{k_1}$ is a partial spread, it follows that
\begin{align*}
	 0=\dim(\mathcal{F}^i_{k_1}\cap\mathcal{F}^j_{k_1})\leqslant\dim(\mathcal{F}^i_{k_1+1}\cap\mathcal{F}^j_{k_1+1})\leqslant2.
\end{align*}

According to Proposition \ref{P4.5}, the $(k_1+1)$-th row of the generator matrix $S[i]$ (resp.$S[j]$) is precisely the vector $B[i]^{(1)}$ (resp.$B[j]^{(1)}$). Hence
\begin{align*}
B[i]^{(1)}\in \mathcal{A}[j]=\mathrm{rowsp}(S[j]^{(k_1)})\ \textrm{and} \ B[j]^{(1)}\in \mathcal{A}[i]=\mathrm{rowsp}(S[i]^{(k_1)})
\end{align*}
 can not both hold true by Proposition \ref{P4.4}, i.e., $\dim(\mathcal{F}^i_{k_1+1}\cap\mathcal{F}^j_{k_1+1})<2$. Therefore, we have the following equations:
\begin{align*}
	&d_S(\mathcal{F}^i_{k_1+1},\mathcal{F}^j_{k_1+1})=2(k_1+1-1)=2k_1\\
	\text{or}\ \ &d_S(\mathcal{F}^i_{k_1+1},\mathcal{F}^j_{k_1+1})=2(k_1+1-0)=2k_1+2.
\end{align*}
That is, we can express this as:
\begin{align*}
	d_S(\mathcal{C}_{k_1+1})=2k_1\ \mathrm{or}\ d_S(\mathcal{C}_{k_1+1})=2k_1+2.
\end{align*}
Now assume that $d_S(\mathcal{C}_{k_1+1})=2k_1+2$, then we have $\mathcal{F}^i_{k_1+1}\cap\mathcal{F}^j_{k_1+1}=\{0\}$ for any two distinct $\mathcal{F}^i_{k_1+1},\mathcal{F}^j_{k_1+1}\in \mathcal{C}_{k_1+1}$. Consequently, the projected code $\mathcal{C}_{k_1+1}$ forms a partial $(k_1+1)$-spread, with $|\mathcal{C}_{k_1+1}|=q^{k_2}+1$. However, the maximum size of a partial $(k_1+1)$-spread is
\begin{align*}
	 \lfloor\dfrac{q^n-1}{q^{k_1+1}-1}\rfloor&=\dfrac{q^{r-2}(q^{2k_1+2}-1)}{q^{k_1+1}-1}+\lfloor\dfrac{q^{r-2}-1}{q^{k_1+1}-1}\rfloor \\ \nonumber
	&=q^{r-2}(q^{k_1+1}+1)\\ \nonumber
	&=q^{k_2-1}+q^{r-2}\\
	&<q^{k_2}+1=|\mathcal{C}_{k_1+1}|
\end{align*}
which presents a contradiction. Consequently, we conclude that $d_S(\mathcal{C}_l)=d_S(\mathcal{C}_{k_1+1})=2k_1$.

Next, assume that $d_S(\mathcal{C}_l)=2k_1$ holds for the integer $l$, and we will prove it for $l+1$.

For every pair of distinct subspaces $\mathcal{F}^i_{l+1},\mathcal{F}^j_{l+1}\in\mathcal{C}_{l+1}$, the corresponding subspaces $\mathcal{F}^i_l$ and $\mathcal{F}^j_l$ represent two distinct subspaces within $\mathcal{C}_l$. Given that $d_S(\mathcal{C}_l) = 2k_1$, it follows that
\begin{align*}
	d_S(\mathcal{F}^i_l,\mathcal{F}^j_l)=2(\dim(\mathcal{F}^i_l+\mathcal{F}^j_l)-l)\geqslant2k_1.
\end{align*}
Notice that there exists a pair of distinct subspaces such that the equality holds since $d_S(\mathcal{C}_l) = 2k_1$.

We next examine this topic by considering two distinct scenarios.

(1) When $d_S(\mathcal{F}^i_l,\mathcal{F}^j_l)=2k_1$:

It follows that $\dim(\mathcal{F}^i_l+\mathcal{F}^j_l)=k_1+l$. In the subsequent discussion, we aim to demonstrate that $\dim(\mathcal{F}^i_{l+1}+\mathcal{F}^j_{l+1})=k_1+l+1$.

Since the subspaces $\mathcal{F}^i_l=\mathrm{rowsp}(S[i]^{(l)})$ and $\mathcal{F}^j_l=\mathrm{rowsp}(S[j]^{(l)})$, the matrix
\begin{align*}
	\begin{pmatrix}
		S[i]^{(l)}\\
		S[j]^{(l)}
	\end{pmatrix} &=
	\left(\begin{array}{c|c}
		\begin{array}{c} I_{k_1}\\O_{(l-k_1)\times k_1} \end{array} & (M^{i-2})^{(l)} \\
		\hline \begin{array}{c} I_{k_1}\\O_{(l-k_1)\times k_1} \end{array} & (M^{j-2})^{(l)}
	\end{array}\right)_{2l\times n} \\
&\rightarrow
	\left(\begin{array}{c|c}
		\begin{array}{c} I_{k_1}\\O_{(l-k_1)\times k_1} \end{array} & (M^{i-2})^{(l)} \\
		\hline O_{l\times k_1} & (M^{j-2}-M^{i-2})^{(l)}
	\end{array}\right)_{2l\times n}.
\end{align*}
According to Lemma \ref{L4.1}, define $M^\nu:=(M^{j-2}-M^{i-2})$  for some $\nu\in\{1,\dots,q^{k_2}-1\}$. Since $\dim(\mathcal{F}^i_l+\mathcal{F}^j_l)=k_1+l$, it is essential to ensure that
\begin{equation}
	\begin{aligned}\label{F3.3}
\mathrm{Rank}\begin{pmatrix}
	S[i]^{(l)}\\
	S[j]^{(l)}
    \end{pmatrix}
&=\mathrm{Rank}\left(\begin{array}{c|c}
	     \begin{array}{c} I_{k_1}\\O_{(l-k_1)\times k_1} \end{array} & (M^{i-2})^{(l)} \\
  \hline O_{l\times k_1} & (M^\nu)^{(l)}
         \end{array}\right)_{2l\times n}\\
&=k_1+l\\
&=\mathrm{Rank}\left(\begin{array}{c|c}
	     I_{k_1}  & (M^{i-2})^{(k_1)} \\
  \hline O_{l\times k_1} & (M^\nu)^{(l)}
         \end{array}\right)_{(l+k_1)\times n}.
\end{aligned}
\end{equation}
By Lemma \ref{L4.2} (2), let the last $(l-k_1)$ rows of $(M^{i-2})^{(l)}$ represented as $v_0,v_0M,\dots,v_0M^{l-k_1-1},$ respectively, and let the first $l$ rows of $M^\nu$ be denoted by $v_1,v_1M,\dots,v_1M^{l-1},$ respectively. Then from (\ref{F3.3}), every vector in the set $\{v_0,v_0M,\dots,v_0M^{l-k_1-1}\}$ can be expressed as a linear combination of the vectors $v_1,v_1M,\dots,v_1M^{l-1}$. Consequently, it follows that $v_0M^{l-k_1}$ is also a linear combination of the vectors $v_1M,v_1M^2,\dots,v_1M^{l}$. 

Calculating the generated matrix of $\mathcal{F}^i_{l+1}+\mathcal{F}^j_{l+1}$, it yields that
\begin{align*}
\begin{pmatrix}
	S[i]^{(l+1)}\\
	S[j]^{(l+1)}
\end{pmatrix}
&=\left(\begin{array}{c|c}
	    \begin{array}{c} I_{k_1}\\O_{(l-k_1+1)\times k_1} \end{array} & (M^{i-2})^{(l+1)} \\
  \hline\begin{array}{c} I_{k_1}\\O_{(l-k_1+1)\times k_1} \end{array} & (M^{j-2})^{(l+1)}
        \end{array}\right)_{2(l+1)\times n} \\
&\rightarrow
\left(\begin{array}{c|c}	
		 \begin{array}{c} I_{k_1}\\O_{(l-k_1+1)\times k_1} \end{array} & (M^{i-2})^{(l+1)} \\
  \hline O_{(l+1)\times k_1} & (M^\nu)^{(l+1)}
	     \end{array}\right)_{2(l+1)\times n}.
\end{align*}
It is evident that the last $(l-k_1+1)$ rows of $(M^{i-2})^{(l+1)}$ are precisely $v_0,v_0M,\dots,v_0M^{l-k_1},$ respectively, which represent linear combinations of $v_1,v_1M,\dots,v_1M^l$. Hence,
\begin{align*}
\mathrm{Rank}\begin{pmatrix}
	S[i]^{(l+1)}\\
	S[j]^{(l+1)}
    \end{pmatrix}
&=\mathrm{Rank}\left(\begin{array}{c|c}
	     \begin{array}{c} I_{k_1}\\O_{(l-k_1+1)\times k_1} \end{array} & (M^{i-2})^{(l+1)} \\
  \hline O_{(l+1)\times k_1} & (M^\nu)^{(l+1)}
         \end{array}\right)_{2(l+1)\times n}\\
&=\mathrm{Rank}\left(\begin{array}{c|c}
	     I_{k_1}  & (M^{i-2})^{(k_1)} \\
  \hline O_{(l+1)\times k_1} & (M^\nu)^{(l+1)}
         \end{array}\right)_{(k_1+l+1)\times n}\\
&=k_1+l+1	
\end{align*}
and $\dim(\mathcal{F}^i_{l+1}+\mathcal{F}^j_{l+1})=k_1+l+1$. Therefore the distance between the subspaces is given by $d_S(\mathcal{F}^i_{l+1},\mathcal{F}^j_{l+1})=2(k_1+l+1-(l+1))=2k_1$.

(2) When $d_S(\mathcal{F}^i_l,\mathcal{F}^j_l)>2k_1$:

The statement implies that $\dim(\mathcal{F}^i_l+\mathcal{F}^j_l)>k_1+l$. Given that $\mathcal{F}^i_l\subset\mathcal{F}^i_{l+1}$ and $\mathcal{F}^j_l\subset\mathcal{F}^j_{l+1}$, it follows that
\begin{align*}
\dim(\mathcal{F}^i_{l+1}+\mathcal{F}^j_{l+1})\geqslant \dim(\mathcal{F}^i_l+\mathcal{F}^j_l) \geqslant k_1+l+1>k_1+l.
\end{align*} 
Consequently, we have
\begin{align*}
d_S(\mathcal{F}^i_{l+1},\mathcal{F}^j_{l+1})\geqslant 2(k_1+l+1-(l+1))=2k_1.
\end{align*}

In conclusion, it is established that $d_S(\mathcal{C}_{l+1})=2k_1$.
\end{proof}

We are now in position to calculate the distance and cardinality of $\mathcal{C}$.

\begin{theorem}\label{T4.15}
	With the framework established by Lemma \ref{L4.13} and Definition \ref{D4.7}, the set $\mathcal{C}$ constitutes a full flag code with distance $d_f(\mathcal{C}) = \frac{n^2 - r^2}{2}$ and cardinality $|\mathcal{C}| = |\mathbf{S}| = q^{k_2} + 1$.
\end{theorem}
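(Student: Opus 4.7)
The two assertions on cardinality and distance will be treated separately.

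For the cardinality, Lemma \ref{L4.13} guarantees that each $S[i]$ has rank $n$, so every $\mathcal{F}^i$ is a well-defined full flag. Moreover, $\mathcal{F}_{k_1}^i=\mathrm{rowsp}(A[i])=\mathcal{A}[i]$, and by Proposition \ref{P4.4} the subspaces $\mathcal{A}[1],\dots,\mathcal{A}[q^{k_2}+1]$ are pairwise distinct members of a partial spread. Hence the flags themselves are pairwise distinct and $|\mathcal{C}|=q^{k_2}+1$.

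For the distance, note that $\tfrac{n^2-r^2}{2}=2k_1k_2$. I would first establish the pair-wise lower bound $d_f(\mathcal{F}^i,\mathcal{F}^j)\geqslant 2k_1k_2$ for every distinct pair by analyzing $d_S(\mathcal{F}_l^i,\mathcal{F}_l^j)$ in three regimes. On $1\leqslant l\leqslant k_1$, the partial spread property forces $\mathcal{F}_l^i\cap\mathcal{F}_l^j\subseteq\mathcal{A}[i]\cap\mathcal{A}[j]=\{0\}$, so $d_S(\mathcal{F}_l^i,\mathcal{F}_l^j)=2l$. On $k_1<l<k_2$, the hypothesis $r<k_1$ forces $l<2k_1$, so $\mathcal{F}_l^i\neq \mathcal{F}_l^j$ (otherwise the $2k_1$-dimensional $\mathcal{A}[i]+\mathcal{A}[j]$ would embed in an $l$-dimensional space), and Lemma \ref{L4.14} gives $d_S(\mathcal{F}_l^i,\mathcal{F}_l^j)\geqslant 2k_1$. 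On $k_2\leqslant l\leqslant n-1$, I would show $\mathcal{F}_l^i+\mathcal{F}_l^j=\mathbb{F}_q^n$, yielding $d_S(\mathcal{F}_l^i,\mathcal{F}_l^j)=2(n-l)$. Summing gives $k_1(k_1+1)+2k_1(r-1)+k_1(k_1+1)=2k_1k_2$.

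For the matching upper bound, I would exhibit a single pair realizing equality. The inductive step in Case 1 of the proof of Lemma \ref{L4.14} propagates the equality $d_S(\mathcal{F}_l^i,\mathcal{F}_l^j)=2k_1$ from level $l$ to level $l+1$. Starting from any pair that realizes $d_S(\mathcal{C}_{k_1+1})=2k_1$, the same pair attains $d_S=2k_1$ at every $k_1<l<k_2$; combined with the exact equalities $d_S=2l$ (at low levels) and $d_S=2(n-l)$ (at high levels) which hold for every distinct pair, this particular pair realizes $d_f=2k_1k_2$ exactly, giving $d_f(\mathcal{C})=2k_1k_2=\tfrac{n^2-r^2}{2}$.

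The main obstacle will be the $l\geqslant k_2$ claim. Since $\mathcal{F}_{k_2}^i\subseteq\mathcal{F}_l^i$ for such $l$, it suffices to prove $\dim(\mathcal{F}_{k_2}^i+\mathcal{F}_{k_2}^j)=n$ for every distinct $i,j$. My plan is to parametrize $\mathcal{F}_{k_2}^i=\{(w^{(k_1)},\,wM^{i-2}):w\in\mathbb{F}_q^{k_2}\}$ for $i\geqslant 2$, express the intersection condition as $w\cdot(M^{j-i}-I)$ vanishing in its first $k_1$ coordinates, and invoke that $M^{j-i}-I$ is a nonzero (hence invertible) element of the field $\mathbb{F}_q[M]$ from Lemma \ref{L4.1}. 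Invertibility forces the first $k_1$ columns of $M^{j-i}-I$ to be linearly independent, so the corresponding left-nullspace has dimension exactly $k_2-k_1=r$, yielding $\dim(\mathcal{F}_{k_2}^i+\mathcal{F}_{k_2}^j)=2k_2-r=n$. The boundary indices $i=1$ and $i=q^{k_2}+1$ must be handled separately using the explicit forms of $A[1],B[1]$ and $A[q^{k_2}+1],B[q^{k_2}+1]$, but in both cases a direct block-matrix computation confirms that the sum still fills $\mathbb{F}_q^n$.
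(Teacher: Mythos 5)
Your proposal is correct and follows the same skeleton as the paper (sum the per-level distances, then match with a realizing pair), but a couple of subarguments are handled differently and deserve comparison. For the regime $k_2\leqslant l\leqslant n-1$, the paper simply refers back to the block-matrix rank computations of Theorem~\ref{T4.12}; you instead parametrize $\mathcal{F}^i_{k_2}=\{(w^{(k_1)},wM^{i-2}):w\in\mathbb{F}_q^{k_2}\}$ and reduce the intersection condition to the invertibility of $M^{j-i}-I$ in the field $\mathbb{F}_q[M]$, which yields $\dim(\mathcal{F}^i_{k_2}\cap\mathcal{F}^j_{k_2})=r$ exactly. This is a cleaner argument than the row reduction in the paper and makes the role of the field structure transparent. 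For the upper bound, the paper computes $d_f(\mathcal{F}^1,\mathcal{F}^2)$ explicitly from the displayed forms of $S[1]$ and $S[2]$, while you instead invoke the propagation in Case~1 of Lemma~\ref{L4.14}'s induction to transport equality $d_S=2k_1$ through every intermediate level; both are valid, though the paper's is more concrete. You also make explicit a point the paper leaves tacit: that $\mathcal{F}^i_l\neq\mathcal{F}^j_l$ for every $l$ and every distinct pair $i\neq j$ (needed to justify $d_f(\mathcal{C})\geqslant\sum_l d_S(\mathcal{C}_l)$, cf.\ Corollary~\ref{C4.18}); your dimension-count argument for $k_1<l<k_2$ via $r<k_1\Rightarrow l<2k_1$ is a genuine clarification.

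One small misidentification: the parametrization $\mathcal{F}^i_{k_2}=\{(w^{(k_1)},wM^{i-2}):w\}$ actually holds for $i\geqslant 3$ \emph{and} for $i=q^{k_2}+1$ (since $M^{q^{k_2}-1}=I_{k_2}$ by Remark~\ref{R4.3}, $B[q^{k_2}+1]=(M^{q^{k_2}-1})_{(r)}$ fits the pattern). The indices that genuinely break the pattern are $i=1$ (the $A[1],B[1]$ blocks are structurally different) and $i=2$, because under the paper's convention $M^0=O_{k_2}$ while $B[2]$ uses the explicit submatrix $B$, so $\mathrm{rowsp}(S[2]^{(k_2)})$ is \emph{not} $\{(w^{(k_1)},wM^0):w\}$. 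You should replace ``$i=1$ and $i=q^{k_2}+1$'' with ``$i\in\{1,2\}$''; the block-matrix verifications you allude to do go through in those cases, so this is a misnaming of the exceptional indices rather than a gap in the argument.
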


\begin{proof}
Since $\mathrm{Rank}(S[i])=n$, then $\dim(\mathcal{F}^i_j)=\mathrm{Rank}(S[i]^{(j)})=j$, and  $\mathcal{F}^i=\{\mathcal{F}_1^i,\mathcal{F}_2^i,\dots,\mathcal{F}_{n-1}^i\}$ constitutes a full flag for all $i=1,\dots,q^{k_2}+1$ and $j=1,\dots,n-1$. Hence, the set $\mathcal{C}$ represents a full flag code with cardinality $|\mathcal{C}| =|\mathbf{S}|=q^{k_2}+1$.

Now let us calculate the distance of each projected code  $\mathcal{C}_i$.

Firstly, since the $k_1$-th projected code $\mathcal{C}_{k_1}$  constitutes a partial $k_1$-spread of $\mathbb{F}_q^n$ by Proposition \ref{P4.4}, the distance $d_S(\mathcal{C}_{k_1})=2k_1$. Furthermore, according to the subspaces inclusion relation $\mathcal{F}^i_1\subset\mathcal{F}^i_2\subset\cdots\subset\mathcal{F}^i_{k_1}$, it is established that $d_S(\mathcal{C}_l)=2l$  for all $l=1,\cdots,k_1$.

In a manner analogous to the proof presented in Theorem \ref{T4.12}, we can deduce that $d_S(\mathcal{C}_{k_2})=2(n-k_2)=2k_1$. Additionally, according to the nested structure of subspaces $\mathcal{F}^i_{k_2}\subset\mathcal{F}^i_{k_2+1}\subset\cdots\subset\mathcal{F}^i_{n-1}$, it follows that $d_S(\mathcal{C}_l)=2(n-l)$ for all $l=k_2,\dots,n-1$.

Besides, based on Lemma \ref{L4.14}, it holds true that $d_S(\mathcal{C}_l)=2k_1$ for all $l=k_1+1,k_1+2,\dots,k_2-1$. Hence, we obtain the distance of the full flag code
\begin{align*}
	d_f(\mathcal{C})\geqslant\sum_{l=1}^{n-1}d_S(\mathcal{C}_l)=2\sum_{l=1}^{k_1}l+2k_1\times (r-1)+2\sum_{l=k_2}^{n-1}(n-l)=\frac{n^2-r^2}{2}.
\end{align*}	
Neverthless, based on 
\begin{align*}
S[1]=\begin{pmatrix}
	O_{k_1}          & I_{k_1} & O_{k_1\times r}\\
	O_{r\times k_1}  &  O_{r\times k_1}  & I_r    \\
	I_{k_1}          &  O_{k_1}     & O_{k_1\times r}
\end{pmatrix}_{n\times n}
\mathrm{and}\quad
S[2]=\left(\begin{array}{c|c}
	I_{k_1}         &  O_{k_1\times k_2} \\
	\hline O_{r\times k_1} & \begin{array}{c}
		\begin{array}{cccccc} 1 & 0 & 0 & \cdots & 0 & 0 \end{array}\\
		\begin{array}{cc} O_{(r-1)\times(k_1+1)} & I_{(r-1)}\end{array}
	\end{array}\\
	\hline I_{k_1}         &  M^{(k_1)}
\end{array}\right)_{n\times n},
\end{align*}
 it yields that $d_f(\mathcal{F}^1,\mathcal{F}^2)=\frac{n^2-r^2}{2}\geqslant d_f(\mathcal{C})$. In conclution, we have $d_f(\mathcal{C})=\frac{n^2-r^2}{2}$.
\end{proof}

According to the proof of Theorem \ref{T4.15}, we have drawn the following result:

\begin{corollary}\label{C4.18}
The distance of the full flag code $\mathcal{C}$ is $d_f(\mathcal{C})=\sum_{l=1}^{n-1}d_S(\mathcal{C}_l)$.
\end{corollary}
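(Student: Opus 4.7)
The plan is simply to extract the equality from the work already performed in the proof of Theorem~\ref{T4.15}; no new machinery is required. In that proof the minimum distance of every projected code $\mathcal{C}_l$ was determined in three ranges: $d_S(\mathcal{C}_l)=2l$ for $1\le l\le k_1$ (using the partial $k_1$-spread property of $\mathcal{C}_{k_1}$ combined with the nested chain $\mathcal{F}^i_1\subset\cdots\subset\mathcal{F}^i_{k_1}$), $d_S(\mathcal{C}_l)=2k_1$ for $k_1<l<k_2$ by Lemma~\ref{L4.14}, and $d_S(\mathcal{C}_l)=2(n-l)$ for $k_2\le l\le n-1$ (by the analogous chain argument applied to the top layer $\mathcal{F}^i_{k_2}\subset\cdots\subset\mathcal{F}^i_{n-1}$). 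First I would add these three contributions; the routine summation gives $\sum_{l=1}^{n-1}d_S(\mathcal{C}_l)=2k_1k_2=\tfrac{n^2-r^2}{2}$, which is exactly the value of $d_f(\mathcal{C})$ asserted by Theorem~\ref{T4.15}. Direct comparison of the two expressions yields the claimed identity.

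Although the derivation is short, it is worth pointing out why the identity carries content rather than being tautological, since this is the only subtlety to guard against. In general one only has $d_f(\mathcal{C})\ge\sum_{l=1}^{n-1}d_S(\mathcal{C}_l)$ under the auxiliary hypothesis $|\mathcal{C}_l|=|\mathcal{C}|$ for every $l$ (otherwise two distinct flags may coincide at some level and contribute $0$ rather than $d_S(\mathcal{C}_l)$), and equality further requires a single pair of flags whose level-wise subspace distances simultaneously realise the minimum of the corresponding projected code. Both ingredients are already furnished by Theorem~\ref{T4.15}: the full-rank property $\mathrm{Rank}(S[i])=n$ established in Lemma~\ref{L4.13} gives $|\mathcal{C}_l|=|\mathcal{C}|=q^{k_2}+1$ at every level, and the explicit pair $(\mathcal{F}^1,\mathcal{F}^2)$ analysed at the end of the theorem's proof is distance-tight at each projection simultaneously. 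Consequently, no genuine obstacle remains, and the corollary is a direct bookkeeping consequence of Theorem~\ref{T4.15}.
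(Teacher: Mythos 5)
Your proposal matches the paper's approach: the paper proves this corollary simply by pointing to the proof of Theorem~\ref{T4.15}, where the lower bound $d_f(\mathcal{C})\geqslant\sum_{l=1}^{n-1}d_S(\mathcal{C}_l)=\tfrac{n^2-r^2}{2}$ and the matching upper bound $d_f(\mathcal{F}^1,\mathcal{F}^2)=\tfrac{n^2-r^2}{2}$ are already established. One small imprecision worth flagging: you attribute $|\mathcal{C}_l|=|\mathcal{C}|$ for all $l$ (needed so that the lower bound is valid levelwise) to the full-rank property of $S[i]$, but full rank only guarantees each $\mathcal{F}^i$ is a genuine full flag; the injectivity of each projection actually comes from the trivial pairwise intersection of the layer $\mathfrak{A}$ (Proposition~\ref{P4.4}), which forces $\mathcal{F}^i_l\neq\mathcal{F}^j_l$ for $l<2k_1$, together with $\dim(\mathcal{F}^i_{k_2}+\mathcal{F}^j_{k_2})=n$, which forces it for $l\geqslant k_2$ — facts that the theorem's proof does establish, so the conclusion still stands.
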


We shall now proceed to examine the case where $r=2$.

\begin{theorem}\label{T4.16}
 	If $r=2$, then the set $\mathcal{C}$ is a quasi-optimal distance flag code. Moreover, if $k_1 > q+1$, then $|\mathcal{C}|$ achieves its maximum value $A_q(n,2k_1;k_1)=q^{k_1+2}+1$.
\end{theorem}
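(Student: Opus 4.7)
The statement essentially synthesizes results that have already been established. My plan is to first settle the QODFC claim, then deduce the maximality of the cardinality from Theorem \ref{T3.5} together with Lemma \ref{L2.2}.

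For the first part, I would simply specialize Theorem \ref{T4.15} at $r=2$. With $r=2$, we have $k_2 = k_1 + 2$ and $n = 2k_1 + 2$ is even, so Lemma \ref{L2.5} (more precisely, formula (\ref{F2.9})) gives $D^{(n)} = \frac{n^2}{2}$. Theorem \ref{T4.15} supplies
\[
 d_f(\mathcal{C}) = \frac{n^2 - r^2}{2} = \frac{n^2}{2} - 2 = D^{(n)} - 2,
\]
so by Definition \ref{D2.3} the full flag code $\mathcal{C}$ is a QODFC. This part is immediate; there is nothing subtle to verify because Theorem \ref{T4.15} has already done the distance bookkeeping.

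For the second part, Theorem \ref{T4.15} again gives $|\mathcal{C}| = q^{k_2}+1 = q^{k_1+2}+1$. On the upper-bound side, we are exactly in the $r=2$ regime of Theorem \ref{T3.5}: that theorem gives $|\mathcal{C}| \leqslant A_q(n,2k_1;k_1)$ for any full flag code of distance $\frac{n^2-r^2}{2}$, and when the extra hypothesis $k_1 > \frac{q^r-1}{q-1} = \frac{q^2-1}{q-1} = q+1$ is assumed, Lemma \ref{L2.2} evaluates this to $A_q(n,2k_1;k_1) = q^{k_1+r}+1 = q^{k_1+2}+1$. Therefore the constructed cardinality $q^{k_1+2}+1$ meets the upper bound, proving maximality.

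Since both halves reduce to quoting earlier results, there is no genuine obstacle; the only care needed is the bookkeeping that (a) $r=2$ lies in the allowed range of Theorem \ref{T3.5}, and (b) the threshold $k > \frac{q^r-1}{q-1}$ of Lemma \ref{L2.2} specializes to $k_1 > q+1$, which is exactly the hypothesis of the theorem. With these two translations in place the proof is a direct two-line conclusion, and I would present it in this streamlined form rather than re-deriving any of the underlying combinatorial bounds.
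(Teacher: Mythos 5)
Your proposal is correct and follows essentially the same route as the paper: the QODFC claim is obtained by specializing Theorem \ref{T4.15} at $r=2$ and comparing $d_f(\mathcal{C})=\frac{n^2-4}{2}$ to $D^{(n)}=\frac{n^2}{2}$ from Lemma \ref{L2.5}, and the cardinality claim follows by matching the construction's size $q^{k_1+2}+1$ against the upper bound coming from Lemma \ref{L2.2}. The one small stylistic difference is that you cite Theorem \ref{T3.5} wholesale for the upper bound, whereas the paper re-derives it on the spot by observing that the $k_1$-th projected code $\mathcal{C}_{k_1}$ is a partial spread and hence bounded by $A_q(n,2k_1;k_1)$; since Theorem \ref{T3.5} is itself proved by that projected-code argument, the two are logically equivalent and your version is, if anything, the cleaner citation.
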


\begin{proof}
Since $r=2$, the integer $n=2k_1+r$ is even.  According to Theorem \ref{T4.15} and Lemma \ref{L2.5}, the distance of $\mathcal{C}$ equals $d_f(\mathcal{C}) = (n^2-r^2)/2 = D^{(n)}-2$.  Consequently,  $\mathcal{C}$ constitutes a quasi-optimum distance flag code.

If $k_1 > \frac{q^2-1}{q-1}=q+1$, then the $k_1$-projected code $\mathcal{C}_{k_1}$, being a constant dimension code of dimension $k_1$, implies that
\begin{align*}
	q^{k_2}+1=|\mathcal{C}|=|\mathcal{C}_{k_1}|\leqslant A_q(n,2k_1;k_1)=\frac{q^{2k_1+2}-q^{k_1+2}}{q^{k_1}-1}+1=q^{k_1+2}+1.
\end{align*}	
Lemma \ref{L2.2} guarantees the validity of the penultimate equality.

Therefore, $|\mathcal{C}|$ attains its maximum value $A_q(n,2k_1;k_1)=q^{k_1+2}+1$.
\end{proof}

\begin{remark}\label{R3.18}
In \cite{AN3}, the authors proposed a construction method for QODFC with cardinality $q^{(n-2)/2}+1$ where $n$ is even. In Theorem \ref{T4.16}, we introduce an improved QODFC construction method with cardinality $|\mathcal{C}|=q^{k_1+2}+1=q^{(n+2)/2}+1$ which surpasses that in \cite{AN3}. Specifically, if $k_1 > \frac{q^2-1}{q-1}=q+1$, then $|\mathcal{C}|$ attains its maximum value. For the case where $k_1 \leqslant q+1$, Lemma \ref{L2.1} demonstrates that $|\mathcal{C}|=|\mathcal{C}_{k_1}|$ deviates from the theoretical maximum by at most $q^2-1$.
\end{remark}

When $r\geqslant3$, the distance $d_f(\mathcal{C})$ of the full flag code $\mathcal{C}$ deviates significantly from $D^{(n)}$, meaning that $\mathcal{C}$ is no longer an ODFC or QODFC. A construction for full flag codes with the third best distance and other distances was proposed in \cite{AN3}, where the cardinality is $q^{(n-1)/2} + 1$ (for $n$ is odd) or $q^{(n-2)/2} + 1$ (for $n$ is even). The constructions presented in this paper yield full flag codes with the same distances as those in \cite{AN3}, but achieve a greater cardinality. For example, when $r=3$,
\begin{align*}
|\mathcal{C}|=q^{k_1+r}+1=q^{\frac{n+3}{2}}+1>q^{\frac{n-1}{2}}+1;
\end{align*}
when $r=4$, 
\begin{align*}
|\mathcal{C}|=q^{k_1+r}+1=q^{\frac{n+4}{2}}+1>q^{\frac{n-2}{2}}+1.
\end{align*}
Although the cardinality of $\mathcal{C}$ for $r\geqslant3$ does not attain its theoretical upper bound, no constructions with a larger cardinality under the same distance condition have been found in existing studies.

\section{Decoding algorithm on erasure channel}

In this section, we present an efficient decoding algorithm for these full flag codes shown in the previous section. For convenience, assume that a full flag $\mathcal{F} = (\mathcal{F}_1,\dots,\mathcal{F}_{n-1})$ is sent, and the receiver obtain a sequence of subspaces $\mathcal{X} = (\mathcal{X}_1,\dots,\mathcal{X}_{n-1})$.

Two types of errors of flag codes are described in \cite{ANS1}: \textit{erasures} and \textit{insertions}. The former represent the lost information from each sent subspace $\mathcal{F}_i$ in the received one $\mathcal{X}_i$ and the latter represent the dimensions of the vector space generated by the set of vectors in $\mathcal{X}_i$ but not in $\mathcal{F}_i$. The specific definitions regarding the numbers of errors are provided below.

\begin{definition}\cite{ANS1}
Set $\mathcal{X}_i=\mathcal{\bar{F}}_i\oplus\mathcal{E}_i$ for every $i \in\{1,\dots,n-1\}$, where $\mathcal{\bar{F}}_i$ is a subspace of $\mathcal{F}_i$ and $\mathcal{F}_i\cap \mathcal{E}_i=\{0\}$.
The \textit{number of eraseres} is
\begin{align}
	d_S(\mathcal{F}_i,\mathcal{\bar{F}}_i)=\dim(\mathcal{F}_i)-\dim(\mathcal{\bar{F}}_i)
\end{align}
and the \textit{number of insertions} is $\dim(\mathcal{E}_i)$. The number of errors at the i-th shot of $\mathcal{F}$ is
\begin{align}
	e_i:=d_S(\mathcal{F}_i,\mathcal{X}_i)=d_S(\mathcal{F}_i,\mathcal{\bar{F}}_i)+\dim(\mathcal{E}_i).
\end{align}
The \textit{total numbers of errors} is given by
\begin{align}	
	e:=d_f(\mathcal{F},\mathcal{X})=\sum_{i=1}^{n-1} d_S(\mathcal{F}_i,\mathcal{X}_i)=\sum_{i=1}^{n-1} e_i.
\end{align}
\end{definition}

\begin{remark}
According to the coding theory, any code with a minimum distance $d$ is capable of detecting up to $(d - 1)$ errors and correcting up to $\lfloor\frac{d-1}{2}\rfloor$ errors.	
\end{remark}

As a premise, the network is modeled as a finite directed acyclic multigraph, featuring a single source and potentially multiple receivers. The source and the receivers agree upon a set of flags, with the information being encoded as these flags. Given that we are operating within an erasure channel framework, only erasures are permitted during the transmission process. We next divide the process into three steps, which was originally introduced in \cite{LNV}.

\begin{itemize}
\item[(1)] Through each outgoing edge, the source transmits the $i$-th row of the generator matrix corresponding to $\mathcal{F}_i$ for $i=1,\dots,n-1$.
\item[(2)] Every intermediate node forms a random linear combination of everything received up to this point for every edge.
\item[(3)] The receiver obtains $i$ random linear combinations, which generate a subspace denoted by $\mathcal{X}_i$. 
\end{itemize}

We now propose an enhanced decoding algorithm specifically designed for our construction, building upon the one presented in \cite{ANS1}. The algorithm will be divided into three distinct steps.

\noindent \textbf{Step 1:}

When at least one of the subspaces $\mathcal{X}_1, \dots, \mathcal{X}_{k}$ is non-trivial, we can uniquely decode $\mathcal{X}$ into the corresponding $\mathcal{F}$ such that $\mathcal{F}_i \in \mathcal{C}_i$.

\begin{proposition}\label{P5.1}
	If there exists an index $i\in\{1,\dots, k_1\}$ such that the corresponding received subspace $\mathcal{X}_i$ is non-trivial, then we decode $\mathcal{X}$ into the unique $\mathcal{F} = (\mathcal{F}_1,\dots,\mathcal{F}_i,\dots,\mathcal{F}_{n-1})$ such that $\mathcal{X}_i\subseteq\mathcal{F}_i$.
\end{proposition}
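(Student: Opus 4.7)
The plan is to establish existence and uniqueness separately, both of which follow essentially from the partial spread property of the first layer of the sandwich construction. The key structural observation is that for every $i \in \{1, \ldots, k_1\}$ and every codeword $\mathcal{F}^a \in \mathcal{C}$, the subspace $\mathcal{F}^a_i = \mathrm{rowsp}(S[a]^{(i)})$ is contained in $\mathcal{A}[a] = \mathrm{rowsp}(S[a]^{(k_1)})$, because the first $k_1$ rows of $S[a]$ are precisely $A[a]$. Hence the family $\{\mathcal{F}^a_i\}_a$ inherits the pairwise trivial intersection property from the partial $k_1$-spread $\mathfrak{A}$ of Proposition \ref{P4.4}.

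For existence, the sender transmits the flag $\mathcal{F} \in \mathcal{C}$, and since the channel model is an erasure channel, the received subspace $\mathcal{X}_i$ is a subspace of the sent $\mathcal{F}_i$. Therefore, $\mathcal{F}$ itself is a flag in $\mathcal{C}$ whose $i$-th component contains $\mathcal{X}_i$, so at least one valid decoding exists.

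For uniqueness, I would argue by contradiction. Suppose there exist two distinct flags $\mathcal{F}^a, \mathcal{F}^b \in \mathcal{C}$ with $a \neq b$ such that $\mathcal{X}_i \subseteq \mathcal{F}^a_i$ and $\mathcal{X}_i \subseteq \mathcal{F}^b_i$. Then
\begin{equation*}
\mathcal{X}_i \subseteq \mathcal{F}^a_i \cap \mathcal{F}^b_i \subseteq \mathcal{A}[a] \cap \mathcal{A}[b] = \{0\},
\end{equation*}
where the last equality is Proposition \ref{P4.4}(2). This forces $\mathcal{X}_i = \{0\}$, contradicting the hypothesis that $\mathcal{X}_i$ is non-trivial. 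Therefore the index $a$, and hence the flag $\mathcal{F}^a$, is uniquely determined.

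There is no real technical obstacle here; the proposition is a direct corollary of the partial spread structure built into the bottom layer of the sandwich. The only point that requires a sentence of explanation is the containment $\mathcal{F}^a_i \subseteq \mathcal{A}[a]$ for $i \leq k_1$, which justifies transferring the spread property from $\mathcal{C}_{k_1}$ down to all projected codes $\mathcal{C}_1, \ldots, \mathcal{C}_{k_1}$. Once this is noted, the proof is a two-line argument.
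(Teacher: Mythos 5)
Your proof is correct and takes essentially the same route as the paper: both arguments rest on the observation that the projected codes $\mathcal{C}_1,\dots,\mathcal{C}_{k_1}$ inherit the partial spread property from $\mathcal{C}_{k_1}=\mathfrak{A}$ via the containments $\mathcal{F}^a_i\subseteq\mathcal{A}[a]$, which forces any nontrivial $\mathcal{X}_i$ to lie in exactly one codeword's $i$-th component. Your version simply spells out the existence step and the trivial-intersection argument more explicitly; the paper compresses this into the statement that $\mathcal{C}_i$ is a partial spread and appeals to $|\mathcal{C}_i|=|\mathcal{C}|$ where you instead track the unique index $a$.
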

\begin{proof}
	In the erasure channel, $\mathcal{X}_i$ is a subspace of $\mathcal{F}_i$. Since the projected code $\mathcal{C}_i$ is a partial spread for every $i\in\{1,\dots, k_1\}$, $\mathcal{F}_i$ must be the only subspace in $\mathcal{C}_i$ that contains $\mathcal{X}_i$. Hence $\mathcal{F} = (\mathcal{F}_1,\dots,\mathcal{F}_i,\dots,\mathcal{F}_{n-1})$ is the only flag that contains $\mathcal{F}_i$ due to $|\mathcal{C}_i| = |\mathcal{C}|$.
\end{proof}

\noindent \textbf{Step 2:}

When all subspaces $\mathcal{X}_1=\cdots=\mathcal{X}_{k_1}=\{0\}$, and at least one of the subspaces $\mathcal{X}_{k_1+1}, \dots, \mathcal{X}_{k_1+r}$ is non-trivial, we cannot apply the algorithm outlined in Step 1 since none of the projected codes $\mathcal{C}_{k_1+1},\dots,\mathcal{C}_{k_1+r}$ is a partial spread. Therefore, we will present an alternative algorithm that utilizes the dimension of $\mathcal{X}_i$.

To minimize the number of errors in advance and enhance the informational content of the received codewords, we implement the method similar to the one shown in \cite{ANS1}. Set the subspaces
\begin{align}
	\mathcal{Y}_i=\{0\} \ \textrm{for}\ i=1,\dots,k_1,\ \textrm{and}\ \mathcal{Y}_i=\oplus^i_{j=k_1+1}\mathcal{X}_j \ \textrm{for}\ i=k_1+1,\dots,n-1.
\end{align}
So we have $\mathcal{Y}_1\subseteq \mathcal{Y}_2\subseteq \cdots \subseteq \mathcal{Y}_{n-1}$, $\mathcal{Y} = (\mathcal{Y}_1,\dots,\mathcal{Y}_{n-1})$ and $\mathcal{Y}_i\subseteq\mathcal{F}_i$ for all $i=1,\dots,n-1$.

\begin{proposition}
Assume that a subspace sequence $\mathcal{Y} = (\mathcal{Y}_1,\dots,\mathcal{Y}_{n-1})$ is received, where $\mathcal{Y}_1=\cdots=\mathcal{Y}_{k_1}=\{0\}$. If there exists an index $i\in\{k_1+1,\dots,k_1+r\}$, such that the subspace $\mathcal{Y}_i$ satisfies $\dim(\mathcal{Y}_i)> i-k_1$, then we decode $\mathcal{Y}$ into the unique $\mathcal{F}$ such that $\mathcal{Y}_i\subseteq\mathcal{F}_i$.
\end{proposition}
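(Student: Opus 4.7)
The plan mirrors that of Proposition \ref{P5.1}, but instead of exploiting the partial-spread property of $\mathcal{C}_i$ (which fails once $i>k_1$), we exploit the uniform bound $d_S(\mathcal{C}_i)=2k_1$ established in Lemma \ref{L4.14} (for $i<k_2$) and in the proof of Theorem \ref{T4.15} (for $i=k_2$). Once this bound is in hand, a simple dimension-count forces $\mathcal{Y}_i$ to sit inside the $i$-th term of at most one codeword, and the injectivity of the $i$-th projection then recovers the whole flag.

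First I would note that on the erasure channel $\mathcal{X}_j\subseteq\mathcal{F}_j$ for every $j$, and since $\mathcal{F}_{k_1+1}\subset\cdots\subset\mathcal{F}_i$, the built subspace $\mathcal{Y}_i=\bigoplus_{j=k_1+1}^{i}\mathcal{X}_j$ also lies inside $\mathcal{F}_i$; hence the transmitted flag is already a candidate and existence is free. For uniqueness, suppose two distinct flags $\mathcal{F}^{j},\mathcal{F}^{j'}\in\mathcal{C}$ both satisfy $\mathcal{Y}_i\subseteq\mathcal{F}_i^{j}\cap\mathcal{F}_i^{j'}$. Writing $l:=i-k_1\in\{1,\dots,r\}$ we have $\dim(\mathcal{F}_i^{j})=\dim(\mathcal{F}_i^{j'})=k_1+l$ by Lemma \ref{L4.13}, so the identity (\ref{F2.1}) together with the distance bound $d_S(\mathcal{F}_i^{j},\mathcal{F}_i^{j'})\geqslant d_S(\mathcal{C}_i)=2k_1$ yields
\[
\dim(\mathcal{F}_i^{j}\cap\mathcal{F}_i^{j'})=(k_1+l)-\tfrac{1}{2}d_S(\mathcal{F}_i^{j},\mathcal{F}_i^{j'})\leqslant (k_1+l)-k_1=l.
\]
But $\mathcal{Y}_i\subseteq\mathcal{F}_i^{j}\cap\mathcal{F}_i^{j'}$ then gives $\dim(\mathcal{Y}_i)\leqslant l$, contradicting the hypothesis $\dim(\mathcal{Y}_i)>l=i-k_1$. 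So the $i$-th term of the decoded flag is forced to be a single element of $\mathcal{C}_i$.

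To lift uniqueness from the $i$-th component to the entire flag, observe that $d_S(\mathcal{C}_i)=2k_1\geqslant 2>0$, so distinct codewords of $\mathcal{C}$ project to distinct elements of $\mathcal{C}_i$; i.e.\ $|\mathcal{C}|=|\mathcal{C}_i|$, and the unique $\mathcal{F}_i^{j}$ above pins down a unique $\mathcal{F}^{j}\in\mathcal{C}$. The only delicate step in the whole argument is the invocation of $d_S(\mathcal{C}_i)=2k_1$ at the index $i\in\{k_1+1,\dots,k_2\}$; once that is cited from the earlier results, the uniqueness and therefore the correctness of the decoding rule are immediate.
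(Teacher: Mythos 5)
Your proof is correct and takes essentially the same route as the paper: the key is the bound $d_S(\mathcal{C}_i) = 2k_1$, which via (\ref{F2.1}) forces $\dim(\mathcal{F}_i \cap \mathcal{F}'_i) \leqslant i - k_1$ for distinct subspaces in $\mathcal{C}_i$, so $\dim(\mathcal{Y}_i) > i - k_1$ pins down a unique $i$-th component. You are in fact more careful than the paper on sourcing: the paper cites only Lemma \ref{L4.14} for the distance bound, but that lemma covers only $i \leqslant k_2 - 1$; the boundary case $i = k_2$ is established separately in the proof of Theorem \ref{T4.15}, as you correctly note.

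One small caveat: your justification that the projection $\mathcal{F} \mapsto \mathcal{F}_i$ is injective on $\mathcal{C}$ --- namely that $d_S(\mathcal{C}_i) > 0$ forces distinct flags to project to distinct subspaces --- does not quite follow, since $d_S(\mathcal{C}_i)$ is defined over distinct \emph{elements of} $\mathcal{C}_i$ and by itself says nothing about whether two different flags could share the same $i$-th component. (This also matters earlier in your argument, because the inequality $d_S(\mathcal{F}_i^{j},\mathcal{F}_i^{j'})\geqslant d_S(\mathcal{C}_i)$ presupposes $\mathcal{F}_i^{j}\neq\mathcal{F}_i^{j'}$.) The injectivity is true, and the clean reason is the partial-spread property at level $k_1$: if $\mathcal{F}_i^j = \mathcal{F}_i^{j'}$ with $j \neq j'$, then this $i$-dimensional subspace contains the two trivially-intersecting $k_1$-dimensional subspaces $\mathcal{F}_{k_1}^j$ and $\mathcal{F}_{k_1}^{j'}$, forcing $i \geqslant 2k_1$, which is impossible for $i \leqslant k_2 = k_1+r < 2k_1$. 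The paper glosses over this step as well, so it is a minor point, but the reason you offer for it should be replaced by this one.
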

\begin{proof}
According to Lemma \ref{L4.14}, the distance $d_S(\mathcal{C}_i)=2k_1$ for all $i\in \{k_1+1,\dots,k_1+r\}$. Consequently, we have
\begin{align*}
\dim(\mathcal{F}_i+\mathcal{F}'_i)-\dim(\mathcal{F}_i\cap\mathcal{F}'_i)=2i-2\dim(\mathcal{F}_i\cap\mathcal{F}'_i)\geqslant d_S(\mathcal{C}_i)=2k_1,
\end{align*}
which implies that $\dim(\mathcal{F}_i\cap\mathcal{F}'_i)\leqslant i-k_1$. Given that $\dim(\mathcal{Y}_i)> i-k_1$, it follows that there exists an unique full flag $\mathcal{F}=\{\mathcal{F}_1,\dots,\mathcal{F}_i,\dots,\mathcal{F}_{n-1}\}$ such that $\mathcal{Y}_i\subseteq\mathcal{F}_i$.
\end{proof}

\noindent \textbf{Step 3:}
When all subspaces $\mathcal{Y}_1=\cdots=\mathcal{Y}_{k_1}=\{0\}$, and each of the subspaces $\mathcal{Y}_i$ satisfies $\dim(\mathcal{Y}_i)\leqslant i-k_1$ for $i\in\{k_1+1,\dots, k_1+r\}$, we again present an algorithm that utilizes the dimensions of $\mathcal{Y}_{k_1+r+1},\dots, \mathcal{Y}_{n-1}$. The following lemma serves as a foundation for our decoding algorithm.

\begin{lemma}\label{L5.5}
Assume a subspace sequence $\mathcal{Y} = (\mathcal{Y}_1,\dots,\mathcal{Y}_{n-1})$ is received, where the subspaces $\mathcal{Y}_1=\cdots=\mathcal{Y}_{k_1}=\{0\}$, and all the subspaces $\mathcal{Y}_i$ satisfy $\dim(\mathcal{Y}_i)\leqslant i-k_1$ for $i\in\{k_1+1,\dots, k_1+r\}$. If the total number of errors $e\leqslant\lfloor\frac{d_f(\mathcal{C})-1}{2}\rfloor$, then there exists an index  $i_0\in\{k_1+r+1,\dots,n-1\}$ such that $\dim(\mathcal{Y}_{i_0})>2i_0-n$.
\end{lemma}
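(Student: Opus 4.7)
The plan is to prove Lemma 5.5 by contradiction. Suppose that $\dim(\mathcal{Y}_i) \leqslant 2i - n$ holds for every $i \in \{k_1+r+1, \ldots, n-1\}$. I will derive a lower bound on the total number of errors $e$ that exceeds $\lfloor (d_f(\mathcal{C})-1)/2 \rfloor$. By Theorem \ref{T4.15}, $d_f(\mathcal{C}) = \frac{n^2-r^2}{2} = 2 k_1 k_2$, and since $2 k_1 k_2 - 1$ is odd, the target ceiling equals $k_1 k_2 - 1$, so it suffices to show $e \geqslant k_1 k_2$.

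The central observation is that in the erasure channel one has $\mathcal{X}_i \subseteq \mathcal{F}_i$, hence $e_i = i - \dim(\mathcal{X}_i)$. The context of Step 3 guarantees that $\mathcal{X}_1 = \cdots = \mathcal{X}_{k_1} = \{0\}$ (otherwise the decoder would have already terminated at Step 1), giving $e_i = i$ for $i \leqslant k_1$. For each $i \geqslant k_1+1$, the inclusion $\mathcal{X}_i \subseteq \mathcal{Y}_i = \oplus_{j=k_1+1}^i \mathcal{X}_j$ yields $\dim(\mathcal{X}_i) \leqslant \dim(\mathcal{Y}_i)$. Plugging in the two hypothesized upper bounds, $\dim(\mathcal{Y}_i) \leqslant i - k_1$ over the middle band $\{k_1+1, \ldots, k_1+r\}$ and the contrary assumption $\dim(\mathcal{Y}_i) \leqslant 2i - n$ over the upper band $\{k_1+r+1, \ldots, n-1\}$, gives $e_i \geqslant k_1$ on the middle band and $e_i \geqslant n - i$ on the upper band.

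Splitting $e = \sum_{i=1}^{n-1} e_i$ into the three corresponding pieces and summing, a short telescoping computation (using $n = 2k_1 + r$) yields
\begin{align*}
e \;\geqslant\; \frac{k_1(k_1+1)}{2} + r k_1 + \frac{k_1(k_1-1)}{2} \;=\; k_1^2 + r k_1 \;=\; k_1 k_2.
\end{align*}
This contradicts the hypothesis $e \leqslant k_1 k_2 - 1$ and completes the argument.

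I do not anticipate any real obstacle beyond careful bookkeeping. The one delicate point is to make the implicit assumption $\mathcal{X}_i = \{0\}$ for $i \leqslant k_1$ explicit (inherited from the decoding branch in which we sit), since without the contribution $\sum_{i=1}^{k_1} i = k_1(k_1+1)/2$ the lower bound would be too weak by exactly the amount needed to reach $k_1 k_2$.
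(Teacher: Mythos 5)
Your proof is correct and follows essentially the same route as the paper: assume the conclusion fails, bound $e_i$ from below by $i$, $k_1$, and $n-i$ on the three bands of indices, and sum to reach $k_1k_2 = d_f(\mathcal{C})/2$, contradicting $e\leqslant k_1k_2-1$. The one small difference is that you explicitly bridge via $\mathcal{X}_i\subseteq\mathcal{Y}_i$ (so $e_i=i-\dim(\mathcal{X}_i)\geqslant i-\dim(\mathcal{Y}_i)$) and make the implicit Step-3 condition $\mathcal{X}_1=\cdots=\mathcal{X}_{k_1}=\{0\}$ explicit, which is slightly more careful than the paper's direct use of $\mathcal{Y}$ in the definition of $e_i$, while the paper compresses the final count by invoking Corollary~\ref{C4.18}; both are valid and yield the same bound.
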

\begin{proof}
Assume that $\dim(\mathcal{Y}_i)\leqslant2i-n$ for all $i\in\{k_1+r+1,\dots, n-1\}$. Since $\mathcal{Y}_i\subseteq\mathcal{F}_i$, it yields that
\begin{align*}
e_i=d_S(\mathcal{Y}_i,\mathcal{F}_i)=\dim(\mathcal{F}_i)-\dim(\mathcal{Y}_i)=i-\dim(\mathcal{Y}_i).
\end{align*}
 Hence
\begin{align*}
	e_i&=i=\frac{d_S(\mathcal{C}_i)}{2},\quad \mathrm{for}\quad i=1,\dots,k_1,\\
	e_i&=i-\dim(\mathcal{Y}_i)\geqslant k_1=\frac{d_S(\mathcal{C}_i)}{2},\quad \mathrm{for}\quad i=k_1+1,\dots,k_1+r,\\
	e_i&=i-\dim(\mathcal{Y}_i) \geqslant n-i=\frac{d_S(\mathcal{C}_i)}{2},\quad \mathrm{for}\quad i=k_1+r+1,\dots,n-1.
\end{align*}	
Then the total number of errors 
\begin{align*}
e=\sum\limits_{i=1}^{n-1} e_i \geqslant\sum\limits_{i=1}^{n-1}\frac{d_S(\mathcal{C}_i)}{2}=\frac{d_f(\mathcal{C})}{2} >\lfloor\frac{d_f(\mathcal{C})-1}{2}\rfloor
\end{align*}
where Corollary \ref{C4.18} guarantees the validity of the equality. And it cannot be correctable, which is a contradiction.
\end{proof}

Next, we present the specific content of the decoding algorithm.

\begin{proposition}
Let the total number of errors $e\leqslant\lfloor\frac{d_f(\mathcal{C})-1}{2}\rfloor$. Assume that a sequence of subspaces $\mathcal{Y}=(\mathcal{Y}_1,\dots,\mathcal{Y}_{n-1})$ is received with $\mathcal{Y}_1=\cdots=\mathcal{Y}_{k_1}=\{0\}$, and that each of the subspaces $\mathcal{Y}_i$ satisfies $\dim(\mathcal{Y}_i)\leqslant i-k_1$ for $i\in\{k_1+1,\dots, k_1+r\}$. Let $k_1+r< i_0 <n$ denote the minimum index such that $\dim(\mathcal{Y}_{i_0})>2i_0-n$, then we decode $\mathcal{Y}$ into the unique $\mathcal{F}$ such that $\mathcal{Y}_{i_0}\subseteq\mathcal{F}_{i_0}$.
\end{proposition}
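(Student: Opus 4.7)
The plan is to establish the proposition by proving two things separately: existence of a flag in $\mathcal{C}$ containing $\mathcal{Y}_{i_0}$ at position $i_0$, and uniqueness of such a flag. Existence is immediate: since the channel is an erasure channel and $\mathcal{Y}_{i_0}$ is obtained as a sum of subspaces of the sent $\mathcal{F}_{i_0}$ (more precisely, $\mathcal{Y}_{i_0} = \oplus_{j=k_1+1}^{i_0}\mathcal{X}_j \subseteq \mathcal{F}_{i_0}$ by construction of $\mathcal{Y}$ in Step 2), the sent flag $\mathcal{F}$ is itself a candidate.

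For uniqueness, I would exploit the distance of the $i_0$-th projected code. By the proof of Theorem~\ref{T4.15}, for every index $i\in\{k_1+r+1,\dots,n-1\}$ we have $d_S(\mathcal{C}_i)=2(n-i)$. Applying formula~(\ref{F2.1}) to any two distinct subspaces $\mathcal{F}_{i_0},\mathcal{F}'_{i_0}\in\mathcal{C}_{i_0}$ then gives
\begin{align*}
2i_0-2\dim(\mathcal{F}_{i_0}\cap\mathcal{F}'_{i_0})\;=\;d_S(\mathcal{F}_{i_0},\mathcal{F}'_{i_0})\;\geqslant\;d_S(\mathcal{C}_{i_0})\;=\;2(n-i_0),
\end{align*}
so $\dim(\mathcal{F}_{i_0}\cap\mathcal{F}'_{i_0})\leqslant 2i_0-n$. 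If two distinct flags $\mathcal{F},\mathcal{F}'\in\mathcal{C}$ both satisfied $\mathcal{Y}_{i_0}\subseteq\mathcal{F}_{i_0}$ and $\mathcal{Y}_{i_0}\subseteq\mathcal{F}'_{i_0}$, then $\mathcal{Y}_{i_0}\subseteq\mathcal{F}_{i_0}\cap\mathcal{F}'_{i_0}$ and hence $\dim(\mathcal{Y}_{i_0})\leqslant 2i_0-n$, contradicting the defining property $\dim(\mathcal{Y}_{i_0})>2i_0-n$ of the index $i_0$.

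Finally, I would lift the uniqueness from position $i_0$ to the whole flag. Theorem~\ref{T4.15} gives $|\mathcal{C}|=q^{k_2}+1$, and Lemma~\ref{L3.1}(2), applied to the distance $d_f(\mathcal{C})=\tfrac{n^2-r^2}{2}=D^{(n)}-2l$ with $l=\tfrac{r^2-r}{2}$ (well inside the admissible range once one checks $l<\tfrac{n-1}{2}$), forces $|\mathcal{C}_{i_0}|=|\mathcal{C}|$. Hence the projection $\mathcal{F}\mapsto\mathcal{F}_{i_0}$ is a bijection from $\mathcal{C}$ onto $\mathcal{C}_{i_0}$, so the unique $\mathcal{F}_{i_0}\in\mathcal{C}_{i_0}$ containing $\mathcal{Y}_{i_0}$ determines $\mathcal{F}$ uniquely.

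The main obstacle, and the step that needs the most care, is verifying that Lemma~\ref{L3.1} indeed applies at the position $i_0$: one must confirm that the integer $l$ attached to $d_f(\mathcal{C})$ is small enough that every $i\in\{1,\dots,n-1\}$ (in particular $i=i_0$) falls in the range where $|\mathcal{C}_i|=|\mathcal{C}|$ is guaranteed. For $r\in\{0,1,2\}$ this is immediate, while for $r\geqslant 3$ I would either invoke Lemma~\ref{L3.1} in the stated range and verify the boundary cases by a direct argument on the companion-matrix block structure of the $S[i]$'s, falling back on Lemma~\ref{L4.2} to separate distinct flags whenever their $i_0$-th projections agree. Once this bookkeeping is done, the contradiction argument above closes the proof.
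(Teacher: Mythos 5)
Your core argument for the uniqueness of the $i_0$-th subspace is exactly the paper's: $d_S(\mathcal{C}_{i_0})=2(n-i_0)$ forces $\dim(\mathcal{F}_{i_0}\cap\mathcal{F}'_{i_0})\leqslant 2i_0-n$ for distinct projections, so $\mathcal{Y}_{i_0}$ with $\dim(\mathcal{Y}_{i_0})>2i_0-n$ can sit inside at most one of them, and existence is free on the erasure channel. That part is correct and matches the intended proof, which simply asserts the equality $\dim(\mathcal{F}^i_{i_0}\cap\mathcal{F}^j_{i_0})=2i_0-n$ and concludes.

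The trouble is in your third paragraph. To lift uniqueness from the $i_0$-th component to the whole flag you need the projection $\mathcal{F}\mapsto\mathcal{F}_{i_0}$ to be injective, i.e.\ $|\mathcal{C}_{i_0}|=|\mathcal{C}|$, and you reach for Lemma~\ref{L3.1}(2). But your value of $l$ is miscomputed: from $d_f(\mathcal{C})=\frac{n^2-r^2}{2}$ and $D^{(n)}$ in~(\ref{F2.9}) one gets $l=\frac{r^2}{4}$ when $r$ (hence $n$) is even and $l=\frac{r^2-1}{4}$ when $r$ is odd, not $l=\frac{r^2-r}{2}$; these disagree already at $r=3$. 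Moreover your instinct that $l<\frac{n-1}{2}$ may fail is correct: with $n=2k_1+r$ and $r$ close to $k_1$, the quantity $\frac{r^2}{4}$ can exceed $k_1+\frac{r-1}{2}$, so Lemma~\ref{L3.1} genuinely does not cover all cases and the proposed ``fallback on Lemma~\ref{L4.2}'' is too vague to close the gap. The clean route, implicit in the paper, bypasses Lemma~\ref{L3.1} entirely: the proof of Theorem~\ref{T4.15} already shows $d_S(\mathcal{C}_{l})=2(n-l)>0$ for all $l\in\{k_2,\dots,n-1\}$, and positive minimum distance of $\mathcal{C}_{i_0}$ is precisely the statement that distinct flags in $\mathcal{C}$ have distinct $i_0$-th subspaces. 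You should replace the Lemma~\ref{L3.1} appeal by this one-line observation; everything else in your argument then stands.
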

\begin{proof}
Since $\dim(\mathcal{F}_{i_0}^i\cap \mathcal{F}_{i_0}^j)=2i_0-n$ for $i\neq j$ and $\dim(\mathcal{Y}_{i_0})>2i_0-n$, there exists a unique subspace $\mathcal{F}_{i_0}$ that contains $\mathcal{Y}_{i_0}$. Consequently, we are able to decode $\mathcal{Y}$ into the unique $\mathcal{F}$, such that $\mathcal{Y}_{i_0}\subseteq\mathcal{F}_{i_0}$.
\end{proof}

We consolidate all the preceding results into the following algorithm.

\begin{table}[H]
\scalebox{1}{%

	\begin{tabular}{l}
		\hline
		\textbf{Decoding algorithm}                             \\ \hline
		\textbf{Data:}  A sequence of subspaces $\mathcal{X} =(\mathcal{X}_1,\dots, \mathcal{X}_{n-1})$.            \\
		\textbf{Result:} The sent flag $\mathcal{F}\in \mathcal{C}$.                              \\
		\textbf{for} $1 \leqslant i \leqslant n-1$                                        \\
		\ \ \ \ \ \ \textbf{if} $i \leqslant k_1$ and $\dim(\mathcal{X}_i) > 0$,                    \\
		\ \ \ \ \ \ \ \ \ decode $\mathcal{X}_i$ into the only $\mathcal{F}_i\in \mathcal{C}_i$ that contains $\mathcal{X}_i$,        \\
		\ \ \ \ \ \ \ \ \ \textbf{return} the unique flag $\mathcal{F}\in \mathcal{C}$ that has $\mathcal{F}_i$ as $i$-th subspace. \\
		\ \ \ \ \ \ \textbf{else} set $\mathcal{Y}_i=\oplus^i_{j=k_1+1}\mathcal{X}_j \ \textrm{for}\ i=k_1+1,\dots,n-1.$
        \\
        \ \ \ \ \ \ \ \ \ \ \ \ \textbf{if} $k_1<i \leqslant k_1+r$ and $\dim(\mathcal{Y}_i) > i-k_1$,                    \\
		\ \ \ \ \ \ \ \ \ \ \ \ \ \ \  decode $\mathcal{Y}_i$ into the only $\mathcal{F}_i\in \mathcal{C}_i$ that contains $\mathcal{Y}_i$,        \\
	    \ \ \ \ \ \ \ \ \ \ \ \ \ \ \ \textbf{return} the unique flag $\mathcal{F}\in \mathcal{C}$ that has $\mathcal{F}_i$ as $i$-th subspace. \\
		\ \ \ \ \ \ \ \ \ \ \ \ \textbf{if} $i>k_1+r$ and $\dim(\mathcal{Y}_i)>2i-n$,   \\
		\ \ \ \ \ \ \ \ \ \ \ \ \ \ \  decode $\mathcal{Y}_i$ into the only $\mathcal{F}_i\in \mathcal{C}_i$ that contains $\mathcal{Y}_i$,        \\
		\ \ \ \ \ \ \ \ \ \ \ \ \ \ \ \textbf{return} the unique flag $\mathcal{F}\in \mathcal{C}$ that has $\mathcal{F}_i$ as $i$-th subspace. \\ \hline
	\end{tabular}%
}
\end{table}

\begin{remark}
In \cite{ANS1}, a decoding algorithm for full flag codes based on spread was introduced, which is executed in two distinct steps. The decoding algorithm presented in this paper serves as an extension of that work. The first and third steps are analogous to the original approach. However, we have incorporated an additional decoding process in the second step. This enhancement arises from the newly integrated second-layer matrix B within the \textquotedblleft sandwich" structure. In light of this feature, we have optimized the original algorithm and proposed a refined three-step decoding scheme.
\end{remark}

\section{Conclusions and future works}
In this paper, we propose the use of partial spreads to construct full flag codes. This approach resembles a \textquotedblleft sandwich" structure, which consists of one layer of a companion matrix \( M \) and two layers of partial spreads associated with \( M \). The \textquotedblleft sandwich" construction leads to the derivations of ODFC, QODFC and flag codes with various distance values. Additionally, an efficient decoding algorithm for the \textquotedblleft sandwich" construction is presented at the end.

Our study opens up multiple avenues for further research and raises several pertinent questions. Below, we outline a selection of these inquiries.
\begin{itemize}
\item [(1)] In Theorem \ref{T3.5}, we explored the relationship between the cardinality of flag code $\mathcal{C}$ and $A_q(n, 2k; k)$ for $r = 0, 1, 2$. Consequently, can we find the relationship between $\mathcal{C}$ and $A_q(n, 2k; k)$ when $r \geqslant 3$?
\item [(2)] In Section 5, we introduced a decoding algorithm specifically designed for use on erasure channels. How to investigate decoding algorithms that can be effectively employed on both erasure and insertion channels simultaneously?
\item [(3)] In Definition \ref{D4.7}, we introduced a method for constructing matrices $S[i]$ corresponding to flag codes utilizing two layers of partial spreads. In an upcoming research, we will explore how to generalize this approach to develop a novel scheme for constructing matrices with multiple layers of partial spreads.
\end{itemize}

\section*{Acknowledgement}
X. Han, X. R. Li and G. Wang are supported by the National Natural Science Foundation of China (No. 12301670), the Natural Science Foundation of Tianjin (No. 23JCQNJC00050), the Fundamental Research Funds for the Central Universities of China (No. 3122024PT24) and the Graduate Student Research and Innovation Fund of Civil Aviation University of China (No. 2024YJSKC06001).

\vskip 30pt

{\small
	
}
\end{document}